\documentclass[12pt]{article}

\usepackage[margin=1in]{geometry}
\usepackage{amsthm}
\usepackage{amssymb}
\usepackage{amsmath}
\usepackage{graphicx}
\usepackage{url}
\usepackage{color}
\usepackage{framed}
\usepackage[table]{xcolor}
\usepackage[ruled,vlined]{algorithm2e}
\usepackage{rotating}

\usepackage{braket} 
\usepackage{xparse,mathtools}

\newtheorem{theorem}{Theorem}

\newtheorem{lemma}[theorem]{Lemma}

\theoremstyle{definition}
\newtheorem{problem}[theorem]{Problem}
\newtheorem{example}[theorem]{Example}
\newtheorem{definition}[theorem]{Definition}

\DeclareMathOperator{\opkernel}{Ker}
\DeclareMathOperator{\opimage}{Im}
\DeclareMathOperator{\opsgn}{sgn}

\DeclareMathOperator{\opspan}{span}
\DeclareMathOperator{\opreal}{Re}
\DeclareMathOperator{\opim}{Im}
\DeclareMathOperator{\opexp}{exp}
\DeclareMathOperator{\oplog}{log}
\DeclareMathOperator{\optrace}{trace}



\DeclarePairedDelimiter\parens{\lparen}{\rparen}


\NewDocumentCommand\prob{d()}{\mathbb{P}\IfValueT{#1}{\parens*{#1}}}
\NewDocumentCommand\expec{d()}{\mathbb{E}\IfValueT{#1}{\parens*{#1}}}

\NewDocumentCommand\vspan{d()}{\opspan\IfValueT{#1}{\parens*{#1}}}
\NewDocumentCommand\kernel{d()}{\opkernel\IfValueT{#1}{\parens*{#1}}}
\NewDocumentCommand\image{d()}{\opimage\IfValueT{#1}{\parens*{#1}}}
\NewDocumentCommand\sgn{d()}{\opsgn\IfValueT{#1}{\parens*{#1}}}

\NewDocumentCommand\re{d()}{\opreal\IfValueT{#1}{\parens*{#1}}}
\NewDocumentCommand\im{d()}{\opim\IfValueT{#1}{\parens*{#1}}}

\NewDocumentCommand\expf{d()}{\opexp\IfValueT{#1}{\parens*{#1}}}
\NewDocumentCommand\logf{d()}{\oplog\IfValueT{#1}{\parens*{#1}}}
\NewDocumentCommand\trace{d()}{\optrace\IfValueT{#1}{\parens*{#1}}}



\title{Parameter estimation in the SIR model\\from early infections}

\author{Charles~Clum\footnote{Department of Mathematics, The Ohio State University, Columbus, OH} \and Dustin~G.~Mixon\footnote{Department of Mathematics, The Ohio State University, Columbus, OH}}
\date{}

\begin{document}
\maketitle

\begin{abstract}
A standard model for epidemics is the SIR model on a graph.
We introduce a simple algorithm that uses the early infection times from a sample path of the SIR model to estimate the parameters this model, and we provide a performance guarantee in the setting of locally tree-like graphs.
\end{abstract}

\section{Introduction}

During an epidemic, government leaders are expected to help maintain public health while simultaneously preventing an economic meltdown.
In the absence of a vaccine, decision makers must choose between various non-pharmaceutical interventions.
This decision requires an informative forecast of the epidemic at a very early time.
To obtain such a forecast, it is helpful to have a parametrized model for epidemics.
What follows is a particularly popular compartmental model that originates from the classic work of Kermack and McKendrick \cite{KermackM:27}.

\begin{definition}[SIR model]
Fix a simple, connected graph $G$ and parameters $\lambda,\mu\geq0$.
Consider a continuous-time Markov chain in which the state is a partition $(S,I,R)$ of $V(G)$.
For the initial state, draw $v\sim\mathsf{Unif}(V(G))$ and put
\[
S(0)=V(G)\setminus\{v\},
\qquad
I(0)=\{v\},
\qquad
R(0)=\emptyset.
\]
Given the current state $(S,I,R)$, then for every $u\in S$ that is adjacent to some member of $I$, the process transitions
\[
(S,I,R)
\to
(S\setminus\{u\},I\cup\{u\},R)
\]
with rate $\lambda |N(u)\cap I|$, while for each $w\in I$, the process transitions
\[
(S,I,R)
\to
(S,I\setminus\{w\},R\cup\{w\})
\]
with rate $\mu$.
\end{definition}

In the real world, it is difficult to distinguish between vertices in the infected set $I(t)$ and vertices in the recovered set $R(t)$ at any time $t$.
For example, Li et al.~\cite{Li:20} estimated the early transmission dynamics of COVID-19 in Wuhan, China by collecting infection times and identifying exposures through contact tracing.
In this paper, we model this lack of information by assuming it is known when a vertex is infected, but unknown when an infected vertex recovers.
We let $U(t)=(S(t))^c=I(t)\cup R(t)$ denote the random set of unsusceptible vertices at time $t$.

\begin{problem}
Given $G$ and $\{U(t)\}_{t\in[0,\epsilon]}$ for some small $\epsilon>0$, estimate $\lambda$ and $\mu$.
\end{problem}

Unfortunately, it is not always possible to estimate $\lambda$ and $\mu$ when $\epsilon$ is small.
This can be seen with a popular instance of the SIR model in which $G$ is the complete graph:

\begin{example}
Suppose $G=K_n$.
By symmetry, it suffices to consider the cardinalities
\[
s(t):=|S(t)|,
\qquad
i(t):=|I(t)|,
\qquad
r(t)=|R(t)|.
\]
In fact, $(s(t),i(t),r(t))$ is also a continuous-time Markov chain in this case.
The initial conditions are $s(0)=n-1$, $i(0)=1$, $r(0)=0$, and the process transitions
\[
(s,i,r)
\to
(s-1,i+1,r)
\]
with rate $\lambda is$ and
\[
(s,i,r)
\to
(s,i-1,r+1)
\]
with rate $\mu i$.
Assuming $n$ is large and $\lambda n=:\beta$, then putting $\sigma:=s/n$, $\iota:=i/n$, and $\rho:=r/n$, we may pass to the mean-field approximation:
\[
\frac{d\sigma}{dt}=-\beta \iota\sigma,
\qquad
\frac{d\iota}{dt}=\beta \iota\sigma-\mu \iota,
\qquad
\frac{d\rho}{dt}=\mu \iota.
\]
This approximation is popular because it is much easier to interact with.
The approximation is good once the number of infected vertices becomes a fraction of $n$, and the approximation is better when this fraction is larger~\cite{Kurtz:71}.
This suggests an initial condition of the form 
\[
\sigma(t_0)=1-\delta-\gamma,
\qquad
\iota(t_0)=\delta,
\qquad
\rho(t_0)=\gamma
\]
for some small $t_0,\delta,\gamma>0$.
For simplicity, we translate time so that $t_0=0$.

We argue there is no hope of determining $(\beta,\mu)$ from data of the form $\{\iota(t)+\rho(t)\}_{t\in[0,\epsilon]}$ for small $\epsilon>0$.
(While the following argument is not rigorous, it conveys the main idea.)
Notice that for $t\in[0,\epsilon]$, it holds that $\sigma(t)\approx 1$, and so $\iota(t)\approx\delta e^{(\beta-\mu)t}$ and
\[
\rho(t)
=\gamma+\mu\int_0^t\iota(s)ds
\approx\gamma+\frac{\mu}{\beta-\mu}\delta e^{(\beta-\mu)s}\Big|_0^t
=\gamma+\frac{\mu}{\beta-\mu}\delta\Big(e^{(\beta-\mu)t}-1\Big).
\]
Then our data takes the form
\begin{align*}
\iota(t)+\rho(t)
&\approx \delta e^{(\beta-\mu)t}+\gamma+\frac{\mu}{\beta-\mu}\delta\Big(e^{(\beta-\mu)t}-1\Big)\\
&=\Big(\gamma-\frac{\mu}{\beta-\mu}\delta\Big)+\frac{\beta}{\beta-\mu}\delta\cdot e^{(\beta-\mu)t}
=:a+be^{ct}.
\end{align*}
We can expect to determine $a$, $b$, and $c$ by curve fitting.
However, we don't know $\delta$ or $\gamma$, but rather their sum.
As such, we claim that $(a,b,c)$ only determines $\beta-\mu$.
Indeed, for every choice of $(\beta,\mu)$ such that $\beta-\mu=c$, it could be the case that
\[
\delta=\frac{c}{\beta}\cdot b,
\qquad
\gamma=a+\frac{\mu}{\beta}\cdot b,
\]
which would then be consistent with the data $(a,b,c)$.
Of course, additional information about $(\beta,\mu)$ could conceivably be extracted from higher-order terms, since $a+be^{ct}$ is merely an approximation of $\iota(t)+\rho(t)$.
However, we expect any such signal to be dwarfed by noise in the data.

While the short-term behavior of $\iota$ is exponential with rate $\beta-\mu$, the long-term behavior is instead governed by the quotient $R_0:=\beta/\mu$, known as the basic reproductive number.
This can be seen by dilating time by substituting $s=\mu t$.
In this variable, the mean-field approximation instead takes the form
\[
\frac{d\sigma}{ds}
=\frac{d\sigma}{dt}\cdot\frac{dt}{ds}
=-R_0 \iota\sigma,
\qquad
\frac{d\iota}{ds}
=\frac{d\iota}{dt}\cdot\frac{dt}{ds}
=R_0 \iota\sigma- \iota,
\qquad
\frac{d\rho}{ds}
=\frac{d\rho}{dt}\cdot\frac{dt}{ds}
= \iota.
\]
That is, $R_0$ (together with initial conditions) determines $\iota$ modulo time dilation, and notably, whether the curve ever exceeds the capacity of the medical care system.
However, $R_0=\beta/\mu$ cannot be determined from $\beta-\mu$.
\hfill$\bigtriangleup$
\end{example}

Overall, the complete graph is not amenable to determining $(\lambda,\mu)$ from $\{U(t)\}_{t\in[0,\epsilon]}$.
However, real-world social networks are far from complete.
Like social networks, expander graphs have low degree, but considering their spectral properties, one might presume that they are just as opaque as the complete graph.
Surprisingly, this is not correct!
In this paper, we show how certain graphs (including certain expander graphs) are provably amenable to determining $(\lambda,\mu)$ from $\{U(t)\}_{t\in[0,\epsilon]}$.

In the following section, we introduce our approach.
Specifically, we isolate infections that pass across bridges in a local subgraph of the social network, and then we estimate $\lambda$ and $\mu$ from these infection statistics.
Section~3 gives the proof of our main result: that our approach provides decent estimates of $\lambda$ and $\mu$ in the setting of locally tree-like graphs.
Our proof makes use of the vast literature on SIR dynamics on infinite trees.
We conclude in Section~4 with a discussion of opportunities for future work.

\section{Parameter estimation from controlled infections}

We start with the simple example in which $G=K_2$.
According to the SIR process, one of the two vertices is infected, and then it either infects the other vertex or it recovers before doing so.
Let $Z$ denote the random amount of time it takes for the second vertex to become infected.
Notice that $Z=\infty$ with probability $\frac{\mu}{\lambda+\mu}$.
On the other hand, if we condition on the event $Z<\infty$, then the distribution of $Z$ is exponential with rate $\lambda+\mu$.
(This is a consequence of the fact that the minimum and minimizer of independent exponential random variables are independent.)
Notice that if we could estimate
\[
\mathbb{P}\{Z<\infty\}
=\frac{\lambda}{\lambda+\mu},
\qquad
\mathbb{E}[Z|Z<\infty]
=\frac{1}{\lambda+\mu},
\]
then we could recover $(\lambda,\mu)$, as desired.
For example, if we had access to multiple independent draws of the SIR model on $K_2$, then we could obtain such estimates.
For certain types of graphs, we can actually simulate this setup, and this is the main idea of our approach.

In practice, we will not have the time to determine whether $Z<\infty$, and so we instead truncate $Z\leftarrow\min\{Z,\tau\}$ for some threshold $\tau>0$.
In particular, we write $Z\sim\mathsf{CI}(\lambda,\mu,\tau)$ to denote a random variable with distribution
\[
\mathbb{P}\{Z\in(a,b)\}
=\int_{\max(a,0)}^{\min(b,\tau)} \lambda e^{-(\lambda+\mu)t}dt
+\frac{\lambda+\mu e^{-(\lambda+\mu)\tau}}{\lambda+\mu}\cdot 1_{\tau\in(a,b)}.
\]
We seek to estimate $\lambda$ and $\mu$ given $\tau$ and estimates of the following quantities:
\begin{align*}
p
&:=\mathbb{P}\{Z<\tau\}
=\frac{\lambda}{\lambda+\mu}\cdot(1-e^{-(\lambda+\mu)\tau}),\\
q
&:=\mathbb{E}[Z|Z<\tau]
=\frac{1}{\lambda+\mu}\cdot\frac{1-((\lambda+\mu)\tau+1)e^{-(\lambda+\mu)\tau}}{1-e^{-(\lambda+\mu)\tau}}.
\end{align*}
First, we show that good estimates of $p$ and $q$ yield good estimates of $\lambda$ and $\mu$:

\begin{lemma}
\label{eq.lambda and mu from p and q}
Suppose $m:=(\lambda+\mu)\tau\geq 2$, and take $P$ and $Q$ such that
\[
e^{-\epsilon}
\leq \frac{P}{p}
\leq e^{\epsilon},
\qquad
e^{-\epsilon}
\leq \frac{1-P}{1-p}
\leq e^{\epsilon},
\qquad
e^{-\epsilon}
\leq \frac{Q}{q}
\leq e^{\epsilon}
\]
for some $\epsilon>0$.
Then
\begin{align*}
e^{-2\epsilon}\lambda
&\leq\frac{P}{Q}
\leq e^{2\epsilon}(1+2(m+1)e^{-m})\cdot \lambda,\\
e^{-2\epsilon}\mu
&\leq \frac{1-P}{Q}
\leq e^{2\epsilon}(1+2(\lambda/\mu+m+1)e^{-m})\cdot \mu.
\end{align*}
\end{lemma}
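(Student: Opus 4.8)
The plan is to route both estimates through the exact ratios $p/q$ and $(1-p)/q$, controlling the approximation error multiplicatively and the ``shape'' error via elementary inequalities in the scalar $m$. First I would split each estimator into hypothesis-bound ratios times an exact ratio:
\[
\frac{P}{Q} = \frac{P}{p}\cdot\frac{q}{Q}\cdot\frac{p}{q},
\qquad
\frac{1-P}{Q} = \frac{1-P}{1-p}\cdot\frac{q}{Q}\cdot\frac{1-p}{q}.
\]
The three hypotheses give $P/p,\ q/Q,\ (1-P)/(1-p)\in[e^{-\epsilon},e^{\epsilon}]$, so each estimator lies within a factor $e^{\pm 2\epsilon}$ of the corresponding exact ratio. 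This reduces all four desired inequalities to the two-sided shape bounds
\[
\lambda \le \frac{p}{q} \le \bigl(1+2(m+1)e^{-m}\bigr)\lambda,
\qquad
\mu \le \frac{1-p}{q} \le \bigl(1+2(\lambda/\mu+m+1)e^{-m}\bigr)\mu.
\]

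Next I would put the exact ratios in closed form. Using the stated formulas for $p$ and $q$ together with the identity $1-p=(\mu+\lambda e^{-m})/(\lambda+\mu)$, a short computation yields
\[
\frac{p}{q} = \lambda\cdot\frac{(1-e^{-m})^2}{1-(m+1)e^{-m}},
\qquad
\frac{1-p}{q} = (\mu+\lambda e^{-m})\cdot\frac{1-e^{-m}}{1-(m+1)e^{-m}}.
\]
Here $m\ge 2$ forces $1-(m+1)e^{-m}>0$ (indeed $(m+1)e^{-m}$ is decreasing for $m>0$ and equals $3e^{-2}<1$ at $m=2$), so every denominator is positive and the inequalities may be cleared without sign reversals.

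The remaining work is purely algebraic. For the $\lambda$-bound, dividing out $\lambda$ and clearing the denominator, the lower bound reduces to $(m-1)e^{-m}+e^{-2m}\ge 0$, true for $m\ge 1$, and the upper bound reduces to $(1+2(m+1)^2)e^{-m}\le m+3$, which holds at $m=2$ and thereafter since the left side decays exponentially while the right side grows linearly. For the $\mu$-bound, I would write $b:=\lambda/\mu\ge 0$ and $x:=e^{-m}$; the lower bound $\mu\le(1-p)/q$ follows at once from $(1-e^{-m})/(1-(m+1)e^{-m})\ge 1$ and $\mu+\lambda e^{-m}\ge\mu$. After dividing out $\mu$ and clearing denominators, the upper bound becomes a statement \emph{linear} in $b$, namely
\[
\bigl(1-(2m+1)e^{-m}\bigr)\,b + \bigl((m+2)-2(m+1)^2 e^{-m}\bigr)\ \ge\ 0,
\]
which I would finish by checking that for $m\ge 2$ both the coefficient of $b$ and the constant term are nonnegative.

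The main obstacle is the uniformity in $\lambda/\mu$ in the $\mu$-bound: unlike the $\lambda$-bound, the shape factor depends on the unknown ratio $\lambda/\mu$, so the inequality must hold for all $b\ge 0$ simultaneously. Collapsing it to a single linear inequality in $b$ is what makes this manageable, and the crux is verifying nonnegativity of the two coefficients at $m=2$, i.e.\ that $(2m+1)e^{-m}\le 1$ and $2(m+1)^2 e^{-m}\le m+2$ for $m\ge 2$ (both following since the exponential factors are small and decreasing on this range).
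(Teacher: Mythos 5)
Your proposal is correct and takes essentially the same route as the paper: the same multiplicative decomposition reducing everything to the exact ratios $p/q$ and $(1-p)/q$, the same closed forms $p/q=\lambda\,\frac{(1-e^{-m})^2}{1-(m+1)e^{-m}}$ and $(1-p)/q=(\mu+\lambda e^{-m})\,\frac{1-e^{-m}}{1-(m+1)e^{-m}}$, and the same reduction to two-sided shape bounds. The only difference is in the final elementary algebra: the paper bounds the shape factors via $(1-e^{-m})^2\le 1$, $(m+1)e^{-m}\le\tfrac{1}{2}$, and $\tfrac{1}{1-x}\le 1+2x$, whereas you clear denominators and verify the resulting polynomial inequalities (including a clean linear-in-$\lambda/\mu$ reduction for the $\mu$-bound); both verifications are valid.
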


\begin{proof}
First, observe that
\begin{align*}
\frac{p}{q}
&=\lambda\cdot \frac{(1-e^{-m})^2}{1-(m+1)e^{-m}}
=:\lambda\cdot \alpha_1,\\
\frac{1-p}{q}
&=\mu\cdot\frac{(1+(\lambda/\mu)e^{-m})(1-e^{-m})}{1-(m+1)e^{-m}}
=:\mu\cdot \alpha_2.
\end{align*}
Since $m\geq 2$, we have $(m+1)e^{-m}\leq 1/2$, and so
\[
1
\leq \alpha_1
\leq \frac{1}{1-(m+1)e^{-m}}
\leq 1+2(m+1)e^{-m},
\]
and
\begin{align*}
1
\leq \alpha_2
\leq \frac{1+(\lambda/\mu)e^{-m}}{1-(m+1)e^{-m}}
&\leq (1+(\lambda/\mu)e^{-m})(1+2(m+1)e^{-m})\\
&=1+(\lambda/\mu+2(m+1))e^{-m}+2(\lambda/\mu)(m+1)e^{-2m}\\
&\leq 1+2(\lambda/\mu+m+1)e^{-m}.
\end{align*}
Thus,
\[
e^{-2\epsilon}\lambda
\leq e^{-2\epsilon}\alpha_1\lambda
=e^{-2\epsilon}\cdot\frac{p}{q}
\leq\frac{P}{Q}
\leq e^{2\epsilon}\cdot\frac{p}{q}
=e^{2\epsilon}\alpha_1\lambda
\leq e^{2\epsilon}(1+2(m+1)e^{-m})\cdot \lambda,
\]
and similarly for $(1-P)/Q$.
\end{proof}

Next, we produce estimates $P$ and $Q$ given independent realizations of $Z$:

\begin{lemma}
\label{lem.estimate by averages}
Given independent realizations $\{Z_{a}\}_{a\in A}$ of $Z\sim\mathsf{CI}(\lambda,\mu,\tau)$, put
\[
L_a:=1_{\{Z_{a}<\tau\}},
\qquad
A':=\{a\in A:Z_{a}<\tau\},
\qquad
P:=\frac{1}{|A|}\sum_{a\in A}L_a,
\qquad
Q:=\frac{1}{|A'|}\sum_{a\in A'}Z_a.
\]
Select $\epsilon>0$.
Then with probability $\geq 1-6e^{-c_\epsilon |A| p(1-p)}$, it holds that
\[
e^{-\epsilon}
\leq \frac{P}{p}
\leq e^{\epsilon},
\qquad
e^{-\epsilon}
\leq \frac{1-P}{1-p}
\leq e^{\epsilon},
\qquad
e^{-\epsilon}
\leq \frac{Q}{q}
\leq e^{\epsilon}.
\]
\end{lemma}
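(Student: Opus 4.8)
The plan is to reduce the three two-sided estimates to six one-sided tail bounds and control each by a Chernoff-type inequality, so that a single union bound yields the factor $6$. I would begin with the observation that $|A'|=\sum_{a\in A}L_a$ is a sum of independent $\mathsf{Bernoulli}(p)$ variables, so $P=|A'|/|A|$ is a binomial average with mean $p$. The two bounds involving $P/p$ and the two involving $(1-P)/(1-p)$ are exactly the upper and lower tails of this average, the latter pair being tails of the complementary average $1-P$, whose mean is $1-p$. Relative (multiplicative) Chernoff bounds give failure probabilities of the form $e^{-|A|p\,f(\epsilon)}$ for the $P/p$ bounds and $e^{-|A|(1-p)\,f(\epsilon)}$ for the $(1-P)/(1-p)$ bounds, where $f(\epsilon)>0$ depends only on $\epsilon$.

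The unifying trick is to fold both rates into the single exponent $|A|p(1-p)$. Since $p(1-p)\le p$ and $p(1-p)\le 1-p$, each of these four failure probabilities is at most $e^{-c_\epsilon|A|p(1-p)}$ once $c_\epsilon$ is chosen small enough. This accounts for four of the six terms.

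For the estimate on $Q$ the difficulty is twofold: $Q$ averages over the random index set $A'$, whose cardinality is itself random, and I need a multiplicative rather than an additive guarantee. I would condition on the $\sigma$-algebra generated by the indicators $\{L_a\}_{a\in A}$. Given this information the set $A'$ is fixed, $|A'|=k$, and the values $\{Z_a:a\in A'\}$ are independent draws from the law of $Z$ conditioned on $Z<\tau$, namely a truncated exponential with mean $q$. A Chernoff bound built from the explicit moment generating function of this conditional law gives $\mathbb{P}\{Q/q\notin[e^{-\epsilon},e^{\epsilon}]\mid|A'|=k\}\le 2e^{-g_\epsilon k}$. The delicate point I would have to verify is that the rate $g_\epsilon$ can be taken free of $\lambda,\mu,\tau$: after rescaling by the mean, the law of $Z_a/q$ is a mean-one truncated exponential, a one-parameter family interpolating between a uniform law and a standard exponential as $(\lambda+\mu)\tau$ ranges over $(0,\infty)$, and a compactness argument shows its multiplicative large-deviation rates are bounded below by a function of $\epsilon$ alone. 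This is exactly where a crude Hoeffding bound in terms of the range $\tau$ would fail, since $q/\tau$ can be arbitrarily small.

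Finally I would assemble the pieces by decomposing the overall failure event as ($P$-bounds fail) or ($P$-bounds hold and the $Q$-bound fails). The first piece is covered by the four binomial terms above. For the second, on the event that the $P$-bounds hold one automatically has $|A'|=|A|P\ge e^{-\epsilon}|A|p=:k_0$, so conditioning and invoking the bound above gives $\mathbb{P}\{P\text{ holds},\,Q\text{ fails}\}\le 2e^{-g_\epsilon k_0}$, which is absorbed into $e^{-c_\epsilon|A|p(1-p)}$ via $k_0=e^{-\epsilon}|A|p\ge e^{-\epsilon}|A|p(1-p)$. Together these give the remaining two terms, for a total of $6e^{-c_\epsilon|A|p(1-p)}$. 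I expect the $Q$ estimate to be the main obstacle, precisely because it requires simultaneously taming the random sample size $|A'|$ — for which the lower bound comes for free from the $P$-bounds — and securing a multiplicative concentration rate that does not deteriorate with the scale parameter.
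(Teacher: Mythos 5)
Your proposal is correct and follows essentially the same route as the paper's proof: binomial concentration yields the two $P$-estimates, the $Q$-estimate is obtained by conditioning on the indicator variables so that $\{Z_a\}_{a\in A'}$ become i.i.d.\ truncated exponentials, the $P$-bound supplies the lower bound $|A'|\geq e^{-\epsilon}p|A|$, and a union bound with $p(1-p)\leq\min(p,1-p)$ assembles everything into $6e^{-c_\epsilon|A|p(1-p)}$. The only divergence is technical, in how the scale-free rate for $Q$ is secured: where you normalize by the mean and invoke compactness of the one-parameter family of truncated exponentials, the paper bounds the subexponential norm $\|Y-\mathbb{E}Y\|_{\psi_1}\leq C_1\|X\|_{\psi_1}\lesssim 1/(\lambda+\mu)$ by comparison with the untruncated exponential and applies Bernstein's inequality --- the uniformity issue you flag as delicate is real, and this $\psi_1$-comparison is the paper's (terser) answer to it.
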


\begin{proof}
For convenience, we put $k:=|A|$ and identify $A=[k]$.
We have $\mathbb{E}L_i=p$, $\operatorname{Var}L_i=p(1-p)$, and $L_i\in[0,1]$ almost surely.
As such, we may apply Bernstein's inequality for bounded random variables (see Theorem~2.8.4 in~\cite{Vershynin:18}):
\begin{align*}
\mathbb{P}\Big\{|P-p|\geq \delta p(1-p)\Big\}
&=\mathbb{P}\bigg\{\Big|\sum_{i=1}^k\big(L_i-\mathbb{E}L_i\big)\Big|\geq k\delta\min(p,1-p)\bigg\}\\
&\leq 2\operatorname{exp}\bigg(-\frac{(k\delta p(1-p))^2/2}{kp(1-p)+(k\delta p(1-p))/3}\bigg)\\
&=2\operatorname{exp}\bigg(-\frac{\delta^2/2}{1+\delta/3}\cdot k\cdot p(1-p)\bigg)\\
&\leq2\operatorname{exp}\bigg(-\frac{1}{4}\cdot\min(\delta^2,\delta/3)\cdot k\cdot p(1-p)\bigg).
\end{align*}
Put $\delta=1-e^{-\epsilon}$.
Then both of the following hold with probability $1-2e^{-c_\epsilon k p(1-p)}$:
\begin{align*}
\Big|\frac{P}{p}-1\Big|
&=\frac{|P-p|}{p}
\leq\frac{|P-p|}{p(1-p)}
\leq 1-e^{-\epsilon},\\
\Big|\frac{1-P}{1-p}-1\Big|
&=\frac{|P-p|}{1-p}
\leq\frac{|P-p|}{p(1-p)}
\leq 1-e^{-\epsilon}.
\end{align*}
Note that this implies 
\[
e^{-\epsilon}
\leq \frac{P}{p}
\leq e^{\epsilon},
\qquad
e^{-\epsilon}
\leq \frac{1-P}{1-p}
\leq e^{\epsilon}.
\]
Next, we estimate $Q$.
Conditioned on $A'$, the random variables $\{Z_{a}\}_{a\in A'}$ are all distributed like a $\tau$-truncated version $Y$ of a random variable $X\sim\mathsf{Exp}(\lambda+\mu)$, and there exist absolute constants $C_1,C_2>0$ such that
\[
\|Y-\mathbb{E}Y\|_{\psi_1}
\leq C_1\|Y\|_{\psi_1}
\leq C_1\|X\|_{\psi_1}
= \frac{C_1C_2}{\lambda+\mu}.
\]
As such, we may apply Bernstein's inequality for subexponential random variables (see Theorem~2.8.1 in~\cite{Vershynin:18}):
\begin{align*}
\mathbb{P}\Big\{|Q-q|\geq (1-e^{-\epsilon})q\Big\}
&\leq \mathbb{P}\Big\{|Q-q|\geq\frac{1-e^{-\epsilon}}{\lambda+\gamma}\Big\}\\
&\leq \mathbb{P}\Big\{|Q-q|\geq\frac{1-e^{-\epsilon}}{\lambda+\gamma}\Big| |A'|\geq e^{-\epsilon}pk\Big\}+\mathbb{P}\Big\{|A'|<e^{-\epsilon}pk\Big\}\\
&\leq \mathbb{P}\Big\{|Q-q|\geq\frac{1-e^{-\epsilon}}{\lambda+\gamma}\Big| |A'|\geq e^{-\epsilon}pk\Big\}+\mathbb{P}\Big\{P<e^{-\epsilon}p\Big\}\\
&\leq 2e^{-c_\epsilon kp}+2e^{-c_\epsilon k p(1-p)}.
\end{align*}
As such, with probability $\geq 1-4e^{-c_\epsilon k p(1-p)}$, it holds that
\[
e^{-\epsilon}
\leq \frac{Q}{q}
\leq e^{\epsilon}.
\]
The result follows from the union bound.
\end{proof}

If we had access to the infected vertices at time $t_0$, we could use the formulas in Lemma~\ref{lem.estimate by averages} to obtain estimators of the SIR parameters that provide a good approximation to the true parameters:

\begin{lemma}
\label{lem.key lemma}
Consider the SIR model on a graph G with parameters $\lambda$ and $\mu$.
Select $r,t_0,t_1>0$ and put $\tau:=t_1-t_0$.
Let $B(r)$ denote the subgraph of $G$ induced by vertices of distance at most $r$ from $U(0)$.
The set of bridges in $B(r)$ with one vertex in $I(t_0)$ and another vertex in $S(t_0)$ takes the form $\{\{a,b\}:a\in A_I,b\in B_a\}$, where $A_I\subseteq I(t_0)$.
For each $a\in A_I$, independently draw $b(a)\sim\mathsf{Unif}(B_a)$.
Let $T(v)$ denote the infection time of $v\in V(G)$, where we take $T(v):=\infty$ if $v$ is never infected.
Consider the random variables
\[
Z_{a}:=\min(T(b(a))-t_0,\tau),
\qquad
L_a:=1_{\{Z_{a}<\tau\}},
\qquad
A_I':=\{a\in A_I:Z_{a}<\tau\},
\]
\[
P_I:=\frac{1}{|A_I|}\sum_{a\in A_I}L_a,
\qquad
Q_I:=\frac{1}{|A_I'|}\sum_{a\in A_I'}Z_a,
\qquad
\hat\lambda_I:=\frac{P_I}{Q_I},
\qquad
\hat\mu_I:=\frac{1-P_I}{Q_I}.
\]
For any fixed $k,\epsilon>0$, define the events
\[
\mathcal{E}_1
:=\{U(t_1)\cap\partial U(t_1)\subseteq V(B(r))\},
\qquad
\mathcal{E}_2
:=\{|\overline{A}|\geq k\},
\]
\[
\mathcal{F}
:=\{\hat\lambda_I\not\in[e^{-\delta}\lambda,e^{\delta}\lambda]\}\cup\{\hat\mu_I\not\in[e^{-\delta}\mu,e^{\delta}\mu]\},
\]
where $\delta:=2\epsilon+2(\lambda/\mu+(\lambda+\mu)\tau+1)e^{-(\lambda+\mu)\tau}$.
Then $\mathbb{P}(\mathcal{F}\cap\mathcal{E}_1\cap\mathcal{E}_2)\leq 6e^{-c_\epsilon kp(1-p)}$, where $p:=\mathbb{P}\{Z<\tau\}$ with $Z\sim\mathsf{CI}(\lambda,\mu,\tau)$.
\end{lemma}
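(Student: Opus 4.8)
The plan is to condition on the process up to time $t_0$, build a purely local infection-versus-recovery competition on each sampled bridge that is manifestly distributed as $\mathsf{CI}(\lambda,\mu,\tau)$, show that on $\mathcal{E}_1$ this local competition reproduces the true infection data, and then invoke Lemmas~\ref{eq.lambda and mu from p and q} and~\ref{lem.estimate by averages}. Concretely, let $\mathcal{G}_0$ be the $\sigma$-algebra generated by the SIR process on $[0,t_0]$ together with the auxiliary choices $\{b(a)\}_{a\in A_I}$. Then all combinatorial objects---the ball $B(r)$, the bridge set, the index set $A_I$, the partners $b(a)$, and the event $\mathcal{E}_2$---are $\mathcal{G}_0$-measurable. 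For each $a\in A_I$ let $X_a$ be the waiting time after $t_0$ until $a$ transmits along $\{a,b(a)\}$ and $Y_a$ the waiting time after $t_0$ until $a$ recovers. By the Markov property at $t_0$ and memorylessness of the exponential clocks, conditioned on $\mathcal{G}_0$ the pairs $\{(X_a,Y_a)\}_{a\in A_I}$ are mutually independent with $X_a\sim\mathsf{Exp}(\lambda)$ independent of $Y_a\sim\mathsf{Exp}(\mu)$; independence across $a$ uses that distinct $a$ give distinct transmission edges and distinct recovery clocks, which is exactly why we sample only one partner $b(a)$ per infected endpoint. Setting $\tilde Z_a:=X_a$ on the event $\{X_a<Y_a,\,X_a<\tau\}$ and $\tilde Z_a:=\tau$ otherwise, the two-vertex computation that motivated $\mathsf{CI}(\lambda,\mu,\tau)$ shows that, conditioned on $\mathcal{G}_0$, the variables $\{\tilde Z_a\}_{a\in A_I}$ are i.i.d.\ $\mathsf{CI}(\lambda,\mu,\tau)$.

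The crux is the coupling identity $Z_a=\tilde Z_a$ for all $a\in A_I$ on the event $\mathcal{E}_1$. Because $\{a,b(a)\}$ is a bridge of $B(r)$, deleting this edge disconnects $B(r)$, and the component containing $b(a)$ can be reached by the infection \emph{within $B(r)$} only through the edge $\{a,b(a)\}$---that is, only by $a$ infecting $b(a)$ directly. The sole remaining way for $b(a)$ to be infected is along a path that exits $B(r)$, and this is precisely what $\mathcal{E}_1$ forbids up to time $t_1$: the containment $U(t_1)\cap\partial U(t_1)\subseteq V(B(r))$ says that the active infection front never escapes $B(r)$ before $t_1$, so no infection reaches the $b(a)$-side from outside. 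Hence, on $\mathcal{E}_1$, the vertex $b(a)$ is infected during $[t_0,t_1]$ exactly when $a$ transmits before recovering and before $\tau$ elapses, with infection time $T(b(a))-t_0=X_a$ in that case; therefore $Z_a=\min(T(b(a))-t_0,\tau)=\tilde Z_a$. I expect this to be the main obstacle, since it demands a careful argument that the bridge property together with $\mathcal{E}_1$ leaves the direct edge as the unique infection channel for each sampled partner, including the claim that front-containment really does seal off every $b(a)$-component from the exterior.

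Finally I would assemble the estimate. Let $\tilde{\mathcal{F}}$ be the event obtained from $\mathcal{F}$ by replacing $Z_a,P_I,Q_I,\hat\lambda_I,\hat\mu_I$ with the versions built from $\{\tilde Z_a\}$. The coupling gives $\mathcal{F}\cap\mathcal{E}_1=\tilde{\mathcal{F}}\cap\mathcal{E}_1$, whence
\[
\mathbb{P}(\mathcal{F}\cap\mathcal{E}_1\cap\mathcal{E}_2)
=\mathbb{P}(\tilde{\mathcal{F}}\cap\mathcal{E}_1\cap\mathcal{E}_2)
\leq\mathbb{P}(\tilde{\mathcal{F}}\cap\mathcal{E}_2).
\]
By the choice of $\delta$, the contrapositive of Lemma~\ref{eq.lambda and mu from p and q} (invoked with $m:=(\lambda+\mu)\tau\ge2$) shows that $\tilde{\mathcal{F}}$ forces the tilde analogues of $P$ and $Q$ to violate at least one of the three $\epsilon$-closeness bounds relating them to $p$ and $q$. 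Conditioning on $\mathcal{G}_0$, on $\mathcal{E}_2$ there are $|A_I|\ge k$ i.i.d.\ samples, so Lemma~\ref{lem.estimate by averages} bounds the $\mathcal{G}_0$-conditional probability of that violation by $6e^{-c_\epsilon|A_I|p(1-p)}\le 6e^{-c_\epsilon kp(1-p)}$. Since $\mathcal{E}_2$ is $\mathcal{G}_0$-measurable, taking expectations gives $\mathbb{P}(\tilde{\mathcal{F}}\cap\mathcal{E}_2)\le 6e^{-c_\epsilon kp(1-p)}$, which is the desired bound.
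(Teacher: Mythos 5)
Your proposal is correct and takes essentially the same route as the paper: the paper realizes the idealized i.i.d.\ $\mathsf{CI}(\lambda,\mu,\tau)$ variables as the bridge-crossing times of an auxiliary SIR process run on $B(r)$ itself (which coincides with the true process exactly on $\mathcal{E}_1$, i.e., until the infection reaches the boundary shell), while you realize them as explicit per-bridge exponential races conditioned on the time-$t_0$ history. In both cases the crux is the same bridge argument showing that each sampled partner $b(a)$ can be infected only through the edge $\{a,b(a)\}$ (the paper phrases this as pairwise distinctness of the $b(a)$ plus the Markov property, you phrase it as sealing off the $b(a)$-component), followed by a conditional application of Lemmas~\ref{eq.lambda and mu from p and q} and~\ref{lem.estimate by averages} on the event $\mathcal{E}_2$.
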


\begin{proof}
Let $S$ denote the vertices in $G$ of distance exactly $r$ from $U(0)$.
Notice that for every vertex $v\in V(B(r))\setminus S$, the edges incident to $v$ in $B(r)$ are precisely the edges incident to $v$ in $G$.
As such, the SIR processes on $G$ and on $B(r)$ are identical until the stopping time
\[
T:=\inf\{t:|U(t)\cap S|>0\}.
\]
For each quantity $T(v),Z_{a},A_I,A_I',B_a,b(a),P,Q,\hat\lambda_I,\hat\mu_I$ defined in the statement of the lemma, there is a corresponding quantity defined by replacing the SIR process on $G$ with the SIR process on $B(r)$, and we denote these variables by $\tilde{T}(v),\tilde{Z}_{a},\tilde{A},\tilde{A}',\tilde{B}_a,\tilde{b}(a),\tilde{P},\tilde{Q},\tilde\lambda,\tilde\mu$.
Each of these variables equals its counterpart over the event $\mathcal{E}_1=\{T>t_1\}$.
In fact, taking $\tilde{\mathcal{F}}$ to similarly correspond to $\mathcal{F}$, then $\tilde{\mathcal{F}}\cap\mathcal{E}_1=\mathcal{F}\cap\mathcal{E}_1$.
This implies
\[
\mathbb{P}(\mathcal{F}\cap\mathcal{E}_1\cap\mathcal{E}_2)
=\mathbb{P}(\tilde{\mathcal{F}}\cap\mathcal{E}_1\cap\mathcal{E}_2)
\leq\mathbb{P}(\tilde{\mathcal{F}}\cap\mathcal{E}_2)
\leq\mathbb{P}(\tilde{\mathcal{F}}|\mathcal{E}_2).
\]
It remains to bound $\mathbb{P}(\tilde{\mathcal{F}}|\mathcal{E}_2)$.

Conditioned on $\tilde{A}$ and $\{\tilde{b}(a)\}_{a\in\tilde{A}}$, then for each $a\in\tilde{A}$, the Markov property implies that $\tilde{Z}_{a}$ has distribution $\mathsf{CI}(\lambda,\mu,\tau)$.
Also, the vertices $\{\tilde{b}(a)\}_{a\in \tilde{A}}$ are pairwise distinct almost surely.
Indeed, if $\tilde{b}(a)=\tilde{b}(a')$, then if we delete the edge $\{a,\tilde{b}(a)\}$, we can still traverse a walk from $a$ to $U(0)$ to $a'$ to $\tilde{b}(a')=\tilde{b}(a)$, implying $\{a,\tilde{b}(a)\}$ was not a bridge.
As such, conditioned on $\tilde{A}$ and $\{\tilde{b}(a)\}_{a\in\tilde{A}}$, the variables $\{\tilde{Z}_{a}\}_{a\in\tilde{A}}$ are jointly independent.
Then Lemmas~\ref{eq.lambda and mu from p and q} and~\ref{lem.estimate by averages} together imply 
\[
\mathbb{P}(\tilde{\mathcal{F}}|\mathcal{E}_2)
=\mathbb{E}[\mathbb{P}(\tilde{\mathcal{F}}|\tilde{A},\{\tilde{b}(a)\}_{a\in\tilde{A}})|\mathcal{E}_2]
\leq\mathbb{E}[6e^{-c_\epsilon |\tilde{A}|p(1-p)}|\mathcal{E}_2]
\leq6e^{-c_\epsilon kp(1-p)}.
\qedhere
\]
\end{proof}

\begin{algorithm}[t]
\SetAlgoLined
\KwData{Graph $G=(V,E)$, parameters $r,t_0,t_1>0$, infection times $T\colon V\to\mathbb{R}\cup\{\infty\}$}
\KwResult{Estimate $(\hat\lambda,\hat\mu)$ of SIR parameters}
Let $B(r)$ denote the subgraph of $G$ induced by vertices of distance at most $r$ from the minimizer of $T$.\\
Put $U:=\{u\in V(B(r)):T(u)\leq t_0\}$.\\
Let $\{\{a,b\}:a\in A,b\in B_a\}$ denote the bridges in $B(r)$ with one vertex in $U\supseteq A$ and the other vertex in $V(B(r))\setminus U\supseteq\bigcup_{a\in A}B_a$.\\
For each $a\in A$, independently draw $b(a)\sim\mathsf{Unif}(B_a)$.\\
Put $\tau:=t_1-t_0$.\\
For each $a\in A$, put $Z_{a}:=\min(T(b(a))-t_0,\tau)$.\\
For each $a\in A$, put $L_a:=1_{\{Z_{a}<\tau\}}$.\\
Put $A':=\{a\in A:Z_a<\tau\}$.\\
Put $P:=\frac{1}{|A|}\sum_{a\in A}L_a$ and $Q:=\frac{1}{|A'|}\sum_{a\in A'}Z_{a}$.\\
Output $\hat\lambda:=\frac{P}{Q}$ and $\hat\mu:=\frac{1-P}{Q}$.
 \caption{SIR parameter estimation from controlled infections\label{alg.main_algorithm}}
\end{algorithm}

Of course, in our setup, we do not have access to $I(t_0)$, but rather $U(t_0)=I(t_0)\cup R(t_0)$, and so we cannot apply Lemma~\ref{lem.key lemma} directly.
Instead, we assume that $\lambda$ is sufficiently large compared to $\mu$ that $U(t_0)$ is a decent approximation of $I(t_0)$.
This approach is summarized in Algorithm~\ref{alg.main_algorithm}.
As we will see, the approximation $I(t_0)\approx U(t_0)$ introduces some error in our estimators.
To analyze the performance of Algorithm~\ref{alg.main_algorithm}, it is convenient to focus on a certain (large) family of graphs.
We say a graph $G$ is \textbf{$(r,\eta)$-locally tree-like} if for a fraction $1-\eta$ of the vertices $v\in V(G)$, it holds that the subgraph induced by the vertices of distance at most $r$ from $v$ is a tree.
For example, it is known that for every fixed choice of $d\in\mathbb{N}$ with $d>1$ and $c\in(0,\frac{1}{4})$, there exists $\gamma>0$ such that a random $d$-regular graph on $n$ vertices is $(c\log_{d-1}n,n^{-\gamma})$-locally tree-like with probability approaching $1$ as $n\to\infty$; see Proposition~4.1 in~\cite{BauerschmidtHY:19}.
By focusing on this class of graphs, we may apply the vast literature on SIR dynamics on infinite trees to help analyze the performance of Algorithm~\ref{alg.main_algorithm}.

\begin{theorem}[Main Result]
\label{thm.parameter estimation locally tree like}
Fix $d\in\mathbb{N}$ with $d\geq 3$ and $c,\gamma>0$, and consider any sequence of $d$-regular, $(c\log_{d-1}n,n^{-\gamma})$-locally tree-like graphs on $n$ vertices with $n\to\infty$.
Select any $\alpha$ and $\beta$ such that $0<\alpha<\beta<\tfrac{c}{2e(d-1)\lambda}$, and put
\[
r:=c\log_{d-1}n,
\qquad
t_0:=\alpha\log_{d-1}n,
\qquad
t_1:=\beta\log_{d-1}n.
\]
For each $n$, let $\mathcal{E}_0$ denote the event that the subgraph $B(r)$ induced by vertices of distance at most $r$ from $U(0)$ is a tree, and let $\mathcal{E}_\infty$ denote the event that $U(\infty)$ contains a vertex of distance greater than $r$ from $U(0)$.
Suppose $\lambda\geq6\mu>0$.
Then one of the following holds:
\begin{itemize}
\item[(a)]
$\limsup_{n\to\infty}\mathbb{P}((\mathcal{E}_0\cap\mathcal{E}_\infty)^c)=1$, meaning there is a subsequence of $n$ for which, with probability at least $1-o(1)$, no vertex outside of $B(r)$ will ever be infected, in which case the infection does not spread to even a constant fraction of the graph.
\item[(b)]
For each $n$, the following holds:
Conditioned on $\mathcal{E}_0\cap\mathcal{E}_\infty$, Algorithm~\ref{alg.main_algorithm} returns
\begin{equation}
\label{eq.success}
\hat\lambda\in[\tfrac{1}{8}\lambda-o(1),\lambda+o(1)],
\qquad
\hat\mu\in[\mu-o(1),8\mu+o(1)]
\end{equation}
with probability tending to $1$ as $n\rightarrow\infty$.
\end{itemize}
Furthermore, $\mathbb{P}(\mathcal{E}_0\cap\mathcal{E}_\infty)\geq 1-\frac{\mu}{(d-2)\lambda-\mu}-o(1)$.
\end{theorem}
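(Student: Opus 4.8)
First I would reduce everything to a single estimate: that the algorithm's failure is asymptotically negligible on $\mathcal{E}_0\cap\mathcal{E}_\infty$. Write $\mathcal{F}_{\mathrm{alg}}$ for the event that Algorithm~\ref{alg.main_algorithm} outputs a pair violating \eqref{eq.success}. The goal is
\[
\mathbb{P}(\mathcal{F}_{\mathrm{alg}}\cap\mathcal{E}_0\cap\mathcal{E}_\infty)=o(1).
\]
Granting this, the dichotomy is soft: if $\limsup_n\mathbb{P}((\mathcal{E}_0\cap\mathcal{E}_\infty)^c)=1$ we are in case (a); otherwise $\liminf_n\mathbb{P}(\mathcal{E}_0\cap\mathcal{E}_\infty)=:\rho>0$, so $\mathbb{P}(\mathcal{F}_{\mathrm{alg}}\mid\mathcal{E}_0\cap\mathcal{E}_\infty)\le\mathbb{P}(\mathcal{F}_{\mathrm{alg}}\cap\mathcal{E}_0\cap\mathcal{E}_\infty)/\rho\to0$, which is case (b).

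To bound the displayed probability I condition on $\mathcal{E}_0$, so that $B(r)$ is a tree and the SIR process inside it agrees, until the infection exits $B(r)$, with SIR on the infinite $d$-regular tree; the latter is the branching process in which each infected vertex recovers at rate $\mu$ and infects each of its $d-1$ children at rate $\lambda$, i.e.\ bond percolation with parameter $p=\lambda/(\lambda+\mu)$, supercritical with Malthusian rate $\nu=(d-1)\lambda-\mu\ge11\mu>0$. Two events must be controlled before Lemma~\ref{lem.key lemma} applies. For $\mathcal{E}_1$ (infection stays in $B(r)$ through time $t_1$), a first-passage/branching-random-walk estimate shows the frontier advances at speed of order $e(d-1)\lambda$; a union bound over the $\asymp(d-1)^r$ vertices at distance $r$, using a Chernoff bound on the $\mathrm{Gamma}(r,\lambda)$ first-passage time along each ray, gives $\mathbb{P}(\mathcal{E}_1^c\cap\mathcal{E}_0)\to0$ exactly because $\beta/c$ lies below the large-deviation threshold, which is what $\beta<\frac{c}{2e(d-1)\lambda}$ guarantees (the factor $2$ supplying the margin). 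For $\mathcal{E}_2$ ($\ge k$ bridge samples), conditioning on $\mathcal{E}_\infty$ forces survival, and the limit $|I(t)|e^{-\nu t}\to W>0$ shows $|I(t_0)|$, and with it the number of bridges $|A_I|$ comparable to $|I(t_0)|$, grows polynomially in $n$, so $\mathcal{E}_2$ holds for any slowly growing $k=k(n)\to\infty$. Feeding $\mathcal{E}_1,\mathcal{E}_2$ into Lemma~\ref{lem.key lemma} (with $\epsilon=\epsilon(n)\to0$ slowly and $\tau=(\beta-\alpha)\log_{d-1}n\to\infty$, so $\delta\to0$) controls the \emph{idealized} estimators $\hat\lambda_I,\hat\mu_I$ built from $I(t_0)$: up to a failure probability $6e^{-c_\epsilon kp(1-p)}=o(1)$ they lie in $[e^{-\delta}\lambda,e^{\delta}\lambda]$ and $[e^{-\delta}\mu,e^{\delta}\mu]$.

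The remaining, and I expect hardest, step is to pass from $\hat\lambda_I,\hat\mu_I$ to what the algorithm actually computes from $U(t_0)=I(t_0)\cup R(t_0)$. On the tree every extra bridge $\{a,b\}$ with $a\in R(t_0)$ has its susceptible endpoint $b$ cut off from all infection except through the recovered vertex $a$, so it contributes $Z_a=\tau$, i.e.\ $L_a=0$ and $a\notin A'$. Hence $A'$ and the truncated-time sum are unchanged, giving $Q=Q_I$, while $P=(|A_I|/|A|)P_I$, so
\[
\hat\lambda=\tfrac{|A_I|}{|A|}\,\hat\lambda_I,
\qquad
\hat\mu=\hat\mu_I+\bigl(1-\tfrac{|A_I|}{|A|}\bigr)\hat\lambda_I.
\]
Thus $\hat\lambda\le\hat\lambda_I$ and $\hat\mu\ge\hat\mu_I$ for free, and \eqref{eq.success} reduces to a lower bound on $|A_I|/|A|=1-|A_R|/|A|$. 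Counting susceptible children as bridges, $|A_R|\le(d-1)|R(t_0)|$ and $|A|\ge(d-2)|U(t_0)|$, so $|A_R|/|A|\lesssim\frac{d-1}{d-2}\,|R(t_0)|/|U(t_0)|$; and $|R(t)|=\mu\int_0^t|I|$ yields $|R(t)|/|I(t)|\to\mu/\nu$, hence $|R(t_0)|/|U(t_0)|\to\mu/((d-1)\lambda)$. Under $\lambda\ge6\mu$ this pushes $|A_R|/|A|$ below both $\tfrac78$ and $\tfrac{7\mu}{\lambda}$ (with large margin), exactly what forces $\hat\lambda\in[\tfrac18\lambda-o(1),\lambda+o(1)]$ and $\hat\mu\in[\mu-o(1),8\mu+o(1)]$. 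The delicacy is that these population ratios must be shown to \emph{concentrate} conditionally on survival, not merely in expectation, which is where I would invoke the limit theory for SIR/branching processes on infinite trees.

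For the final bound, $\mathbb{P}(\mathcal{E}_0)\ge1-n^{-\gamma}$ since a uniformly random root has a tree neighborhood with that probability, and on $\mathcal{E}_0$ the same coupling gives $\mathbb{P}(\mathcal{E}_\infty\mid\mathcal{E}_0)=\mathbb{P}(\text{tree infection reaches depth}>r)\ge1-q_0$, where $q_0$ is the branching-process extinction probability. Writing $q_1=u^{d-1}$ for the single-subtree extinction probability, the fixed point $1-u=p(1-u^{d-1})$, divided by $1-u$, factors into the clean identity $\sum_{j=0}^{d-2}u^j=1/p=1+\mu/\lambda$; keeping the $j=0,1$ terms gives $u\le\mu/\lambda$, whence
\[
q_0\le q_1=u^{d-1}\le(\mu/\lambda)^{d-1}\le\frac{\mu}{(d-2)\lambda-\mu},
\]
the last inequality being $(d-2)(\lambda/\mu)-1\le(\lambda/\mu)^{d-1}$, valid whenever $\lambda/\mu\ge2$ and hence under $\lambda\ge6\mu$. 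Combining, $\mathbb{P}(\mathcal{E}_0\cap\mathcal{E}_\infty)\ge(1-n^{-\gamma})(1-q_0)\ge1-\frac{\mu}{(d-2)\lambda-\mu}-o(1)$, as claimed; note this also shows the right-hand side is $\ge\tfrac45-o(1)$, so case (a) is in fact vacuous and (b) always holds.
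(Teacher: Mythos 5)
Your high-level skeleton matches the paper's: the soft dichotomy reduction, a first-passage bound for $\mathcal{E}_1$ (your Gamma/Chernoff union bound over rays is a clean substitute for the paper's martingale argument in Lemma~\ref{lem.branching process}), the exact identities $Q=Q_I$ and $P=\tfrac{|A_I|}{|A|}P_I$, and feeding Lemma~\ref{lem.key lemma} the idealized estimators. However, there are two genuine gaps.

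First, your derivation of the bound $\mathbb{P}(\mathcal{E}_0\cap\mathcal{E}_\infty)\geq 1-\tfrac{\mu}{(d-2)\lambda-\mu}-o(1)$ is incorrect: SIR on a tree with $\mathsf{Exp}(\mu)$ recovery is \emph{not} independent bond percolation with parameter $p=\lambda/(\lambda+\mu)$. The transmissions from a vertex to its $d-1$ children share one recovery clock $T$, so they are positively correlated; the offspring law is the mixed binomial $\mathrm{Bin}(d-1,1-e^{-\lambda T})$ with $T\sim\mathsf{Exp}(\mu)$, which has the same mean as $\mathrm{Bin}(d-1,p)$ but larger variance, hence a strictly larger extinction probability. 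Your recursion $1-u=p(1-u^{d-1})$ and the resulting bound $q\le(\mu/\lambda)^{d-1}$ describe the wrong process, and the bound is actually false for the true one: for $d=3$ and $\lambda=6\mu$, the offspring generating function is $f(s)=\tfrac{1}{91}(7+12s+72s^2)$, whose smallest fixed point is $\tfrac{7}{72}$, whereas $(\mu/\lambda)^{d-1}=\tfrac{1}{36}=\tfrac{2}{72}$. The paper instead computes $p_0$ and $p_0+p_1$ exactly for the mixed binomial and uses $q\le p_0/(1-p_0-p_1)$ (Lemma~\ref{lem.bound third term}). This also undercuts your concluding remark that case (a) is vacuous; that observation is correct, but only once the ``furthermore'' bound is derived the paper's way.

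Second, the quantitative heart of the argument---that $|U(t_0)|$ is large (event $\mathcal{E}_2$) and that $|R(t_0)|/|U(t_0)|$ is small, \emph{conditionally} on $\mathcal{E}_\infty$---is exactly what you defer to ``limit theory'': you invoke $e^{-\nu t}|I(t)|\to W>0$ on survival and the ratio limit $|R|/|U|\to\mu/((d-1)\lambda)$ without proof or precise citation, and you flag this yourself as the delicate step. Note also that $\mathcal{E}_\infty$ (reaching distance $>r$) is not the survival event, so even granting a Kesten--Stigum-type theorem one must first show $\mathbb{P}(\text{extinction}\mid\mathcal{E}_\infty)=o(1)$ before it applies. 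The paper avoids all of this with elementary, self-contained estimates: Bernstein's inequality applied to the jump times conditioned on $\{T_0<\infty\}\supseteq\mathcal{E}_\infty$ (Lemma~\ref{lem.bound first term}) handles $\mathcal{E}_2$, and a binomial tail bound on the embedded jump chain---each transition is a recovery with probability at most $\mu/(\mu+\lambda)$---yields the one-sided bound $|R(t_0)|<\epsilon|U(t_0)|$ with $\epsilon\approx 2\mu/(\mu+\lambda)$ (Lemma~\ref{lem.bound second term}), which combined with Lemma~\ref{lem.bridge points} is all that \eqref{eq.success} requires. The trade-off is real: if you carried out the branching-process limit theory, your sharper ratio would improve the constants $\tfrac18$ and $8$ substantially; but as written those steps are asserted, not proven. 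A minor further slip: $|A|\ge(d-2)|U(t_0)|$ confuses bridge edges with bridge endpoints ($A$ is a set of vertices, so the correct statement is $|A|\ge(1-\tfrac2d)|U(t_0)|$ as in Lemma~\ref{lem.bridge points}), and since the algorithm averages over vertices of $A$ rather than over bridge edges, your edge-weighted ratio $\tfrac{(d-1)|R|}{(d-2)|U|}$ is not quite the quantity $1-|A_I|/|A|$ that controls $\hat\lambda/\hat\lambda_I$ and $\hat\mu/\hat\mu_I$.
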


\begin{figure}[t]
\begin{center}
\begin{turn}{90}
\hspace{1.1in}
\begin{turn}{270}
\footnotesize{$\lambda$}
\end{turn}
\end{turn}
\hspace{-0.1in}
\includegraphics[width=0.4\textwidth,trim={110 220 110 220},clip]{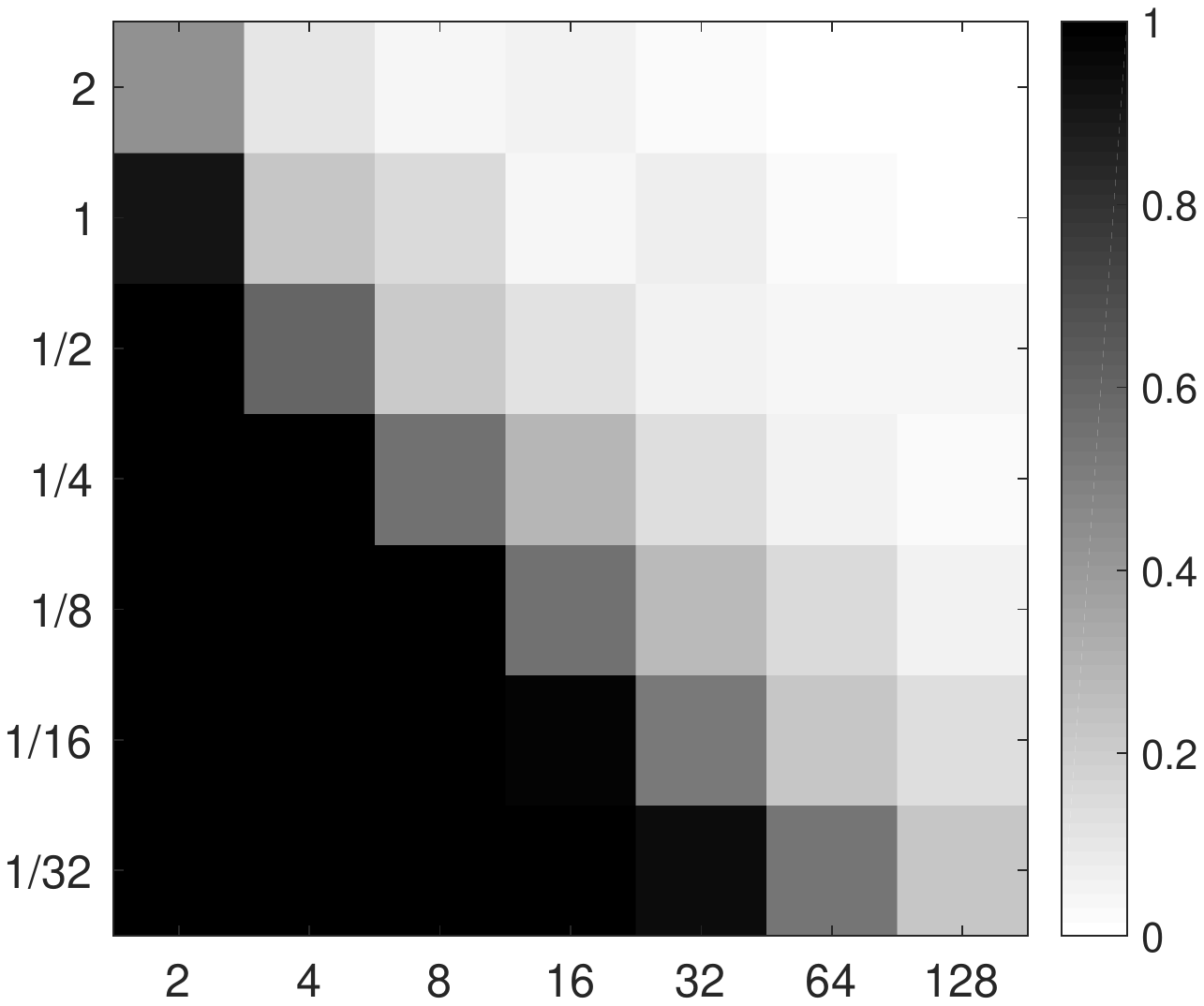}
\hspace{0.5in}
\begin{turn}{90}
\hspace{1.1in}
\begin{turn}{270}
\footnotesize{$\lambda$}
\end{turn}
\end{turn}
\hspace{-0.1in}
\includegraphics[width=0.4\textwidth,trim={110 220 110 220},clip]{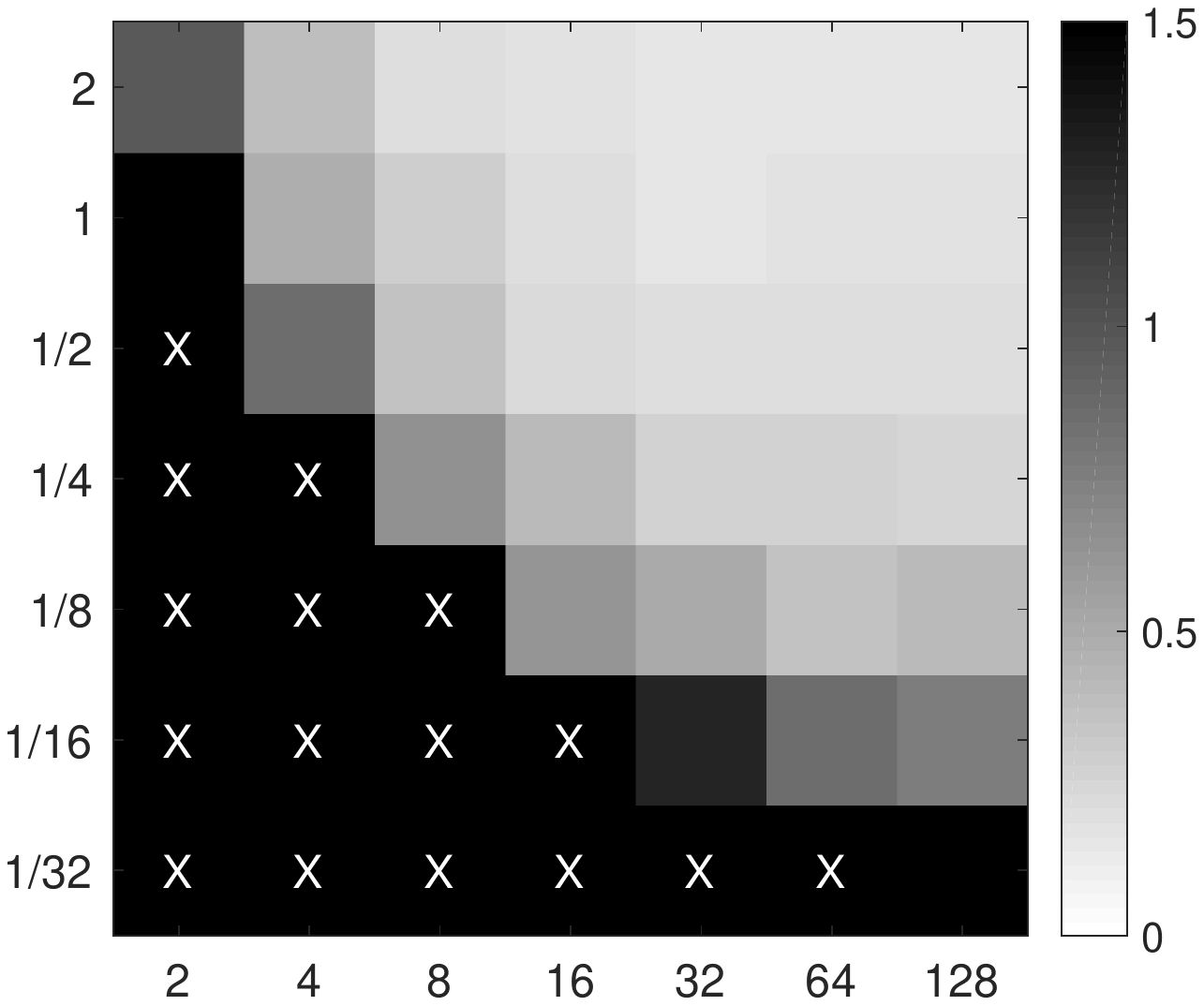}\\
\vspace{-0.1in}
\footnotesize{$d$}\hspace{3.2in}\footnotesize{$d$}
\end{center}
\caption{\label{fig.phase_transitions}
For each $d\in\{2^k:k\in\{1,\ldots,7\}\}$, draw $100$ independent sample paths of the SIR process initialized at the root of the infinite $d$-ary tree for each $\lambda\in\{2^j:j\in\{-5,\ldots,1\}\}$ with $\mu=1$.
Consider $T_0:=\inf\{t:|U(t)|\geq 100\}$.
\textbf{(left)}
Proportion of sample paths with $T_0=\infty$.
For these instances, the lack of spread renders parameter estimation moot.
This plot mimics the phase transition in Lemma~\ref{lem.bound third term}.
\textbf{(right)}
Run Algorithm~\ref{alg.main_algorithm} with $r=\infty$, $t_0=T_0$, and $t_1=T_0+4$.
If the algorithm breaks for at least $80$ of the trials (either because $T_0=\infty$ or $Q=0$), plot a black square with a white ``X.''
Otherwise, plot the average of $\max((|\hat\lambda-\lambda|)/\lambda,(|\hat\mu-\mu|)/\mu)$ over the trials for which the algorithm does not break.
Apparently, Algorithm~\ref{alg.main_algorithm} performs better than predicted by Theorem~\ref{thm.parameter estimation locally tree like}.
}
\end{figure}

The proof is given in the next section, and Figure~\ref{fig.phase_transitions} illustrates the actual behavior of Algorithm~\ref{alg.main_algorithm} for comparison.
The factors of $8$ in \eqref{eq.success} are due to the approximation $I(t_0)\approx U(t_0)$, and they have not been optimized.
We suspect that these factors can be replaced by terms that approach $1$ as $d\to\infty$, but this requires a different technique.
We also suspect the hypothesis $\lambda\geq 6\mu$ is an artifact of our proof.
As the following lemma indicates, the threshold $(d-2)\lambda>\mu$ would be more natural; this threshold arises from standard results on Galton--Watson processes (see the lecture notes~\cite{Alsmeyer:online}, for example).

\begin{lemma}
\label{lem.bound third term}
Consider any sequence $\{G_n\}$ of $d$-regular graphs on $n$ vertices with $n\to\infty$ such that $G_n$ is $(r_n,\eta_n)$-locally tree-like with $r_n\to\infty$ and $\eta_n\to0$.
For each $n$, consider the SIR model on $G_n$ with parameters $\lambda$ and $\mu$, and let $\mathcal{E}_\infty$ denote the event that $U(\infty)$ contains a vertex of distance greater than $r$ from $U(0)$.
\begin{itemize}
\item[(a)]
If $(d-2)\lambda\leq\mu$, then $U(\infty)\subseteq B(r_n)$ with probability at least $1-o(1)$ as $n\to\infty$.
\item[(b)]
If $(d-2)\lambda>\mu$, then for each $n$, it holds that $\mathbb{P}(\mathcal{E}_\infty^c)\leq \frac{\mu}{(d-2)\lambda-\mu}+\eta_n$.
\end{itemize}
\end{lemma}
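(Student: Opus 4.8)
The plan is to reduce the behavior of the SIR infection on the locally tree-like graph to a Galton--Watson branching process, and then apply classical extinction/survival theory. The key observation is that within the tree neighborhood $B(r_n)$, the set of ever-infected vertices $U(\infty)$ evolves exactly as a percolation-type process on a tree, and the frontier of this process can be coupled to a Galton--Watson tree. Concretely, consider the root $v$ (the initially infected vertex), which has $d$ neighbors; each other vertex in the tree has $d-1$ children. For the $K_2$-type competition along each tree edge, recall from the discussion preceding Lemma~\ref{eq.lambda and mu from p and q} that an infected vertex transmits to a given susceptible neighbor (before recovering) with probability $\tfrac{\lambda}{\lambda+\mu}$, by the race between an $\mathsf{Exp}(\lambda)$ transmission clock and an $\mathsf{Exp}(\mu)$ recovery clock. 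The complication is that an infected vertex races its recovery clock against \emph{all} of its susceptible neighbors simultaneously, so the number of children it infects is not a product of independent Bernoullis.

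First I would make this branching structure precise. An infected vertex with $\ell$ susceptible neighbors recovers at rate $\mu$ while each edge fires at rate $\lambda$; the number $X$ of neighbors infected before recovery is determined by how many of the $\ell$ exponential transmission clocks ring before the single recovery clock. By the memoryless property and the independence of the minimum and minimizer of competing exponentials, one computes the offspring distribution directly: conditioning on successive races, $\mathbb{P}\{X\geq j\}=\prod_{i=0}^{j-1}\tfrac{(\ell-i)\lambda}{(\ell-i)\lambda+\mu}$ for the root ($\ell=d$) and similarly with $\ell=d-1$ for non-root vertices. The quantity that governs survival is the mean offspring number $m$ of a \emph{typical} (non-root) vertex, which has $d-1$ susceptible children. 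The plan is to show $\mathbb{E}[X]=\sum_{i=0}^{d-2}\tfrac{(d-1-i)\lambda}{(d-1-i)\lambda+\mu}$, and then to verify the clean lower bound $\mathbb{E}[X]\geq (d-2)\lambda/\mu$ in the regime where this branching process is supercritical; the threshold $(d-2)\lambda=\mu$ should emerge as the boundary between $\mathbb{E}[X]\leq 1$ and $\mathbb{E}[X]>1$ after an appropriate simplification of these sums.

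For part (a), the subcritical/critical case $(d-2)\lambda\leq\mu$, I would argue that the associated Galton--Watson process goes extinct almost surely (mean offspring $\leq 1$), so the infection started at any tree-vertex dies out before reaching distance $r_n$ with probability tending to $1$; since a $1-\eta_n$ fraction of roots sit atop genuine trees out to radius $r_n$ and $\eta_n\to0$, a union bound over the (single) random root gives $U(\infty)\subseteq B(r_n)$ with probability $1-o(1)$. For part (b), the supercritical case, I would bound the extinction probability $\xi$ of the branching tree: standard Galton--Watson theory gives $\xi$ as the smallest fixed point of the offspring generating function, and the quantitative bound $\xi\leq \tfrac{\mu}{(d-2)\lambda}$ (or the displayed $\tfrac{\mu}{(d-2)\lambda-\mu}$) should follow from a first-moment / generating-function estimate, e.g.\ comparing to a geometric offspring law or using the inequality $\mathbb{P}(\text{extinction})\leq 1/\mathbb{E}[X]$-style bounds refined for this distribution. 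Adding the $\eta_n$ defect from the non-tree-like roots yields $\mathbb{P}(\mathcal{E}_\infty^c)\leq \xi+\eta_n\leq \tfrac{\mu}{(d-2)\lambda-\mu}+\eta_n$.

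\textbf{The main obstacle} I expect is establishing the correct, clean offspring mean and the sharp extinction-probability bound $\tfrac{\mu}{(d-2)\lambda-\mu}$, rather than merely a qualitative supercriticality statement. The offspring distribution is a product of race-probabilities rather than a binomial, so computing $\mathbb{E}[X]$ and then extracting a fixed-point bound that matches the stated constant requires care; I would look to the cited Galton--Watson lecture notes~\cite{Alsmeyer:online} for the exact extinction-probability estimate, likely a bound of the form $\xi\leq s$ where $s$ solves a linear relaxation of the fixed-point equation $f(s)=s$. A secondary subtlety is the root versus non-root asymmetry ($d$ neighbors at the root, $d-1$ at every other vertex): the survival of the whole process is driven by the non-root mean, but one must confirm the root does not change the threshold, only lower-order constants. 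Finally, transferring the infinite-tree branching conclusions to the finite radius-$r_n$ neighborhood of $G_n$ requires the coupling to remain valid until the infection either dies out or escapes $B(r_n)$, which is exactly the content of the locally tree-like hypothesis with $r_n\to\infty$.
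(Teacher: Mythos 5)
Your overall skeleton does match the paper's: couple the infection inside the tree-like ball to a Galton--Watson process, identify $(d-2)\lambda=\mu$ as the criticality threshold, get part (a) from almost-sure extinction, and get part (b) by bounding the extinction probability and adding the $\eta_n$ defect for non-tree roots. But two of the steps you would need are wrong or missing. First, your offspring mean is incorrect. Your tail formula $\mathbb{P}\{X\geq j\}=\prod_{i=0}^{j-1}\tfrac{(\ell-i)\lambda}{(\ell-i)\lambda+\mu}$ is right, but the mean is then $\sum_{j\geq1}\mathbb{P}\{X\geq j\}$, a sum of \emph{products}, not the sum $\sum_{i=0}^{d-2}\tfrac{(d-1-i)\lambda}{(d-1-i)\lambda+\mu}$ that you wrote. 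With $d-1=2$ and $\lambda=\mu$ (exactly the critical point $(d-2)\lambda=\mu$), your expression gives $\tfrac{2}{3}+\tfrac{1}{2}=\tfrac{7}{6}>1$ while the true mean is $1$, so your formula misplaces the threshold and part (a) fails precisely at criticality; your proposed bound $\mathbb{E}[X]\geq(d-2)\lambda/\mu$ is also false (take $d=3$, $\lambda=2\mu$: the mean is $\tfrac{4}{3}<2$). The clean fix is the paper's route: conditioned on the recovery time $T\sim\mathsf{Exp}(\mu)$, the number of infected children is binomial with parameters $d-1$ and $1-e^{-\lambda T}$, so by linearity $\mathbb{E}X=(d-1)\tfrac{\lambda}{\lambda+\mu}$, and $\mathbb{E}X\leq1$ holds if and only if $(d-2)\lambda\leq\mu$.

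Second, and more seriously, part (b) is the crux of the lemma and you do not have an argument for it. The ``$\mathbb{P}(\text{extinction})\leq 1/\mathbb{E}[X]$-style'' bound you invoke is false for general Galton--Watson processes: with $p_0=0.9$ and $p_{100}=0.1$ the mean is $10$ but extinction probability is at least $0.9$. The comparison to a geometric offspring law is not worked out, and the cited notes only tell you that the extinction probability $q$ is the smallest fixed point of the generating function $f$; they do not hand you the constant $\tfrac{\mu}{(d-2)\lambda-\mu}$. The paper's actual argument is a quadratic (not linear) relaxation: since $f(s)\leq p_0+p_1s+(1-p_0-p_1)s^2$ on $[0,1]$, the point $s^*=\tfrac{p_0}{1-p_0-p_1}$ satisfies $f(s^*)\leq s^*$, hence $q\leq s^*$. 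One then computes $p_0=\tfrac{\mu}{\mu+(d-1)\lambda}$ exactly, computes $p_0+p_1=(d-1)\tfrac{\mu}{\mu+(d-2)\lambda}-(d-2)\tfrac{\mu}{\mu+(d-1)\lambda}$ via the second order statistic of the $d-1$ transmission clocks, bounds $p_0+p_1\leq\tfrac{2\mu}{\mu+(d-2)\lambda}$, and combines these to obtain exactly $q\leq\tfrac{\mu}{(d-2)\lambda-\mu}$. This chain of explicit computations is what produces the stated constant, and nothing in your proposal substitutes for it.
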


\begin{proof}
For (a), it suffices to show $\mathbb{P}(\{U(\infty)\not\subseteq B(r_n)\}|\mathcal{E}_0)=o(1)$, since then
\begin{align*}
\mathbb{P}\{U(\infty)\not\subseteq B(r_n)\}
&=\mathbb{P}(\{U(\infty)\not\subseteq B(r_n)\}\cap\mathcal{E}_0^c)+\mathbb{P}(\{U(\infty)\not\subseteq B(r_n)\}\cap\mathcal{E}_0)\\
&\leq\mathbb{P}(\mathcal{E}_0^c)+\mathbb{P}(\{U(\infty)\not\subseteq B(r_n)\}|\mathcal{E}_0)\cdot\mathbb{P}(\mathcal{E}_0)\\
&=o(1).
\end{align*}
Restrict to the event $\mathcal{E}_0$, and put $\{v\}=U(0)$.
Deleting $v$ from $B(r_n)$ produces $d$ connected components, each of which can be viewed as a subgraph of the infinite $(d-1)$-ary tree $T$ that is rooted by the corresponding neighbor $w$ of $v$.
The SIR evolution on $T$ determines a Galton--Watson process that gives the eventual number $Z_m$ of unsusceptible vertices at distance $m\geq1$ from $w$:
\[
Z_m
=\sum_{k=1}^{Z_m-1}X_{m,k},
\]
where $X_{m,k}$ denotes the number of vertices infected by the $k$th infected vertex in $T$ that has distance $m$ from $w$.
The random variables $\{X_{n,k}\}_{n,k\geq1}$ are independent with distribution matching a random variable that we denote $X$.

We can describe the distribution of $X$ as follows.
Draw random variables $T\sim \mathsf{Exp}(\mu)$ and $C_{1},\ldots, C_{d-1}\sim \mathsf{Exp}(\lambda)$.
If $N$ denotes the number of $k\in [d-1]$ for which $C_{k}<T$, then $N$ is distributed like $X$.
Conditioned on $T$, this number is a binomial with parameters $d-1$ and $1 - e^{-\lambda T}$.
Hence,
\begin{align*}
\mathbb{E}X 
= \mathbb{E}[\mathbb{E}[N|T]] 
= (d-1)\mathbb{E}[1 - e^{-\lambda T}]
= (d-1)\cdot\tfrac{\lambda}{\lambda + \mu}
\leq 1.
\end{align*}
Since in addition, it holds that $\mathbb{P}\{X>1\}>0$, Theorem~1.7 in~\cite{Alsmeyer:online} gives that $\sum_{m=1}^\infty Z_m$ is finite almost surely.
Put $M:=\sup\{m:Z_m>0\}$.
Then $M<\infty$ almost surely.
In particular, a union bound over the $d$ different neighbors of $v$ gives
\[
\mathbb{P}(\{U(\infty)\not\subseteq B(r_n)\}|\mathcal{E}_0)
\leq d\cdot\mathbb{P}\{M>r_n\}
=o(1),
\]
where the last step uses the fact that the survivor function of $M$ vanishes at infinity.

For (b), it suffices to show $\mathbb{P}(\mathcal{E}_\infty^c|\mathcal{E}_0)\leq \frac{\mu}{(d-2)\lambda-\mu}$, since then
\begin{align*}
\mathbb{P}(\mathcal{E}_\infty^c)
=\mathbb{P}(\mathcal{E}_\infty^c \cap\mathcal{E}_0^c)+\mathbb{P}(\mathcal{E}_\infty^c \cap\mathcal{E}_0)
\leq\mathbb{P}(\mathcal{E}_0^c)+\mathbb{P}(\mathcal{E}_\infty^c|\mathcal{E}_0)\mathbb{P}(\mathcal{E}_0)
\leq \eta_n+\tfrac{\mu}{(d-2)\lambda-\mu}.
\end{align*}
Restrict to the event $\mathcal{E}_0$, and put $\{v\}=U(0)$.
As before, we identify a Galton--Watson process to analyze, but this one is slightly different:
Delete one of the neighbors of $v$ from $B(r_n)$ and identify the connected component $C$ containing $v$ with a subgraph of a $(d-1)$-ary tree $T$ with root $v$.
The SIR evolution on $T$ determines a Galton--Watson process that gives the eventual number $Z_m$ of unsusceptible vertices at distance $m\geq1$ from $v$:
\[
Z_m
=\sum_{k=1}^{Z_m-1}X_{m,k},
\]
where $X_{m,k}$ denotes the number of vertices infected by the $k$th infected vertex in $T$ that has distance $m$ from $v$.
The random variables $\{X_{m,k}\}_{m,k\geq1}$ are independent with distribution matching a random variable that we denote $X$.
We see that $\mathbb{P}(\mathcal{E}_\infty^c|\mathcal{E}_0)$ is at most the extinction probability $q$ of this process.
By Theorem~1.7 in~\cite{Alsmeyer:online}, $q$ is the smallest fixed point of the generating function $f(s):=\mathbb{E}[s^X]$ for $s\in[0,1]$.

Put $p_0:=\mathbb{P}\{X=0\}$ and $p_1:=\mathbb{P}\{X=1\}$.
Since $f(0)=p_0>0$, then for every $s>0$ that satisfies $f(s)\leq s$, we have $q\leq s$ by the intermediate value theorem.
Since
\[
f(\tfrac{p_0}{1-p_0-p_1})
\leq p_0+p_1(\tfrac{p_0}{1-p_0-p_1})+(1-p_0-p_1)(\tfrac{p_0}{1-p_0-p_1})^2
=\tfrac{p_0}{1-p_0-p_1},
\]
it follows that $q\leq \tfrac{p_0}{1-p_0-p_1}$.
Before estimating $p_0$ and $p_1$, it is helpful to introduce some notation.
An infected parent with $d-1$ children infects $X$ of these children.
The parent recovers exponentially with rate $\mu$, and we let $R$ denote the recovery time.
Simultaneously, each child is infected exponentially with rate $\lambda$, and so we denote independent random variables $I_i\sim\mathsf{Exp}(\lambda)$ such that 
\[
\left\{\begin{array}{cl}
I_i&\text{if }I_i<R\\
\infty&\text{otherwise}
\end{array}\right.
\]
gives the time of infection for child $i$.
Then
\begin{align*}
p_0
=\mathbb{P}\{X=0\}
=\mathbb{E}[\mathbb{P}(\{X=0\}|R)]
&=\mathbb{E}[1-\mathbb{P}(\{X>0\}|R)]\\
&=1-\mathbb{E}[\mathbb{P}(\{\min_i I_i<R|R)]\\
&=1-\mathbb{E}[1-e^{-(d-1)\lambda R}]
=\mathbb{E}[e^{-(d-1)\lambda R}]
=\tfrac{\mu}{\mu+(d-1)\lambda}.
\end{align*}
Next, the order statistic $I_{(2)}$ has the same distribution as $\tfrac{1}{(d-1)\lambda}Z_1+\tfrac{1}{(d-2)\lambda}Z_2$, where $Z_1,Z_2\sim\mathsf{Exp}(1)$ are independent.
It follows that
\begin{align*}
p_0+p_1
=\mathbb{P}\{X\leq 1\}
&=\mathbb{E}[\mathbb{P}(\{X\leq 1\}|R)]\\
&=1-\mathbb{E}[\mathbb{P}(\{I_{(2)}<R\}|R)]\\
&=(d-1)(d-2)\lambda\cdot\mathbb{E}[\tfrac{1}{(d-2)\lambda}e^{-(d-2)\lambda R}-\tfrac{1}{(d-1)\lambda}e^{-(d-1)\lambda R}]\\
&=(d-1)\cdot\tfrac{\mu}{\mu+(d-2)\lambda}-(d-2)\cdot\tfrac{\mu}{\mu+(d-1)\lambda}.
\end{align*}
This identity combined with the fact that $x\mapsto\frac{\mu x}{\mu+\lambda x}$ is an increasing function gives
\[
p_0+p_1
\leq\Big((d-1)-(d-2)\Big)\Big(\tfrac{\mu}{\mu+(d-1)\lambda}+\tfrac{\mu}{\mu+(d-2)\lambda}\Big)
=\tfrac{\mu}{\mu+(d-1)\lambda}+\tfrac{\mu}{\mu+(d-2)\lambda}
\leq \tfrac{2\mu}{\mu+(d-2)\lambda}.
\]
Overall, since $(d-2)\lambda>\mu$, we have
\[
q
\leq \tfrac{p_0}{1-p_0-p_1}
\leq (1-\tfrac{2\mu}{\mu+(d-2)\lambda})^{-1}\cdot\tfrac{\mu}{\mu+(d-1)\lambda}
\leq (1-\tfrac{2\mu}{\mu+(d-2)\lambda})^{-1}\cdot\tfrac{\mu}{\mu+(d-2)\lambda}
=\tfrac{\mu}{(d-2)\lambda-\mu}.
\qedhere
\]
\end{proof}

\section{Proof of Theorem~\ref{thm.parameter estimation locally tree like}}

The last statement follows from Lemma~\ref{lem.bound third term}.
To prove the remainder of the result, we will assume that (a) does not hold and prove that (b) holds.
Since (a) does not hold, there exists some $\kappa > 0$ such that $\mathbb{P}(\mathcal{E}_{0}\cap \mathcal{E}_{\infty}) > \kappa$ for all sufficiently large $n$.
Select $n$ sufficiently large, and let $\mathcal{F}$ denote the failure event that \eqref{eq.success} does not hold for some $o(1)$ function to be identified later.
We wish to show that $\mathbb{P}(\mathcal{F}|\mathcal{E}_{0}\cap \mathcal{E}_{\infty}) = o(1)$.
Let $\mathcal{E}_1$ denote the event that $U(t_1)\cup\partial U(t_1)$ is contained within distance $r$ of $U(0)$.
In particular, on the event $\mathcal{E}_{0}\cap \mathcal{E}_{1}$, all of infected and recovered vertices at time $t_{1}$ reside in a tree with root $U(0)$.   
Also, selecting $k:=\frac{\alpha(\mu+\lambda)}{4}\cdot r=\frac{\alpha(\mu+\lambda)}{4}\cdot c\log_{d-1}n$, let $\mathcal{E}_{2}$ denote the event that $|U(t_0)|\geq k$.

We will make use of two simple inequalities involving arbitrary events $A$, $B$, and $C$.
First,
\begin{equation}
\label{simple bound 1}
\mathbb{P}(A|B)
=\tfrac{\mathbb{P}(A\cap B\cap C)}{\mathbb{P}(B)}+\tfrac{\mathbb{P}(A\cap B\cap C^c)}{\mathbb{P}(B)}
\leq\tfrac{\mathbb{P}(A\cap B\cap C)}{\mathbb{P}(B\cap C)}+\tfrac{\mathbb{P}(B\cap C^c)}{\mathbb{P}(B)}
=\mathbb{P}(A|B\cap C)+\mathbb{P}(C^c|B).
\end{equation}
Furthermore, if $\mathbb{P}(C^{c}|B)\neq 1$, then
\begin{equation}
\label{simple bound 2}
\mathbb{P}(A|B\cap C)
=\frac{\mathbb{P}(A\cap B\cap C)}{\mathbb{P}(C|B)\mathbb{P}(B)}
\leq\frac{\mathbb{P}(A\cap B)}{\mathbb{P}(C|B)\mathbb{P}(B)}
=\frac{\mathbb{P}(A|B)}{1 - \mathbb{P}(C^{c}|B)}.
\end{equation}
Two applications of \eqref{simple bound 1} gives  
\begin{align}
\mathbb{P}(\mathcal{F}|\mathcal{E}_0\cap\mathcal{E}_\infty)
\nonumber
&\leq\mathbb{P}(\mathcal{F}|\mathcal{E}_0\cap\mathcal{E}_1\cap\mathcal{E}_\infty)+\mathbb{P}(\mathcal{E}_1^c|\mathcal{E}_0\cap\mathcal{E}_\infty)\\
\label{eq.terms to bound}
&\leq\mathbb{P}(\mathcal{F}|\mathcal{E}_0\cap\mathcal{E}_1\cap\mathcal{E}_2\cap\mathcal{E}_\infty)
+\mathbb{P}(\mathcal{E}_2^c|\mathcal{E}_0\cap\mathcal{E}_1\cap\mathcal{E}_\infty)
+\mathbb{P}(\mathcal{E}_1^c|\mathcal{E}_0\cap\mathcal{E}_\infty).
\end{align}
To bound the third term in \eqref{eq.terms to bound}, we let $\mathbb{P}_{\lambda,\mu}$ denote the probability measure corresponding to the SIR model with parameters $\lambda$ and $\mu$.
Then, for sufficiently large $n$, we have
\begin{equation}
\label{eq.bound on third term}
\mathbb{P}(\mathcal{E}_1^c|\mathcal{E}_0\cap\mathcal{E}_\infty)
=\frac{\mathbb{P}_{\lambda,\mu}(\mathcal{E}_0\cap\mathcal{E}_1^c\cap\mathcal{E}_\infty)}{\mathbb{P}_{\lambda,\mu}(\mathcal{E}_0\cap\mathcal{E}_\infty)}
\leq\frac{\mathbb{P}_{\lambda,0}(\mathcal{E}_0\cap\mathcal{E}_1^c\cap\mathcal{E}_\infty)}{\mathbb{P}_{\lambda,\mu}(\mathcal{E}_0\cap\mathcal{E}_\infty)}
\leq\frac{\mathbb{P}_{\lambda,0}(\mathcal{E}_0\cap\mathcal{E}_1^c)}{\kappa}.
\end{equation}
As such, it suffices to show that $\mathbb{P}_{\lambda,0}(\mathcal{E}_0\cap\mathcal{E}_1^c)=o(1)$.
We accomplish this by analyzing the corresponding branching process:

\begin{lemma}
\label{lem.branching process}
Let $G$ denote the infinite $d$-ary tree.
Consider the process $\{H_t\}_{t\geq0}$ of induced subgraphs of $G$ in which $H_0$ is induced by the root vertex, and then for each $v\in V(G)\setminus V(H_t)$ that is a $G$-child of some vertex in $H_t$, it holds that $v$ is added to $H_t$ at unit rate.
Let $B_m$ denote the first time at which a vertex of distance $m$ from the root vertex of $G$ is added to $H_t$.
Then $B_m\geq m/(2ed)$ with probability $\geq1-e^{-m/2}$.
\end{lemma}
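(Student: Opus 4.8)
The plan is to reduce the lower-tail bound on $B_m$ to a union bound over the $d^m$ vertices at distance $m$ from the root, exploiting the fact that on a tree each such vertex has a single possible infector, so that its addition time is a sum of independent unit-rate exponentials.

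First I would fix a vertex $w$ at distance $m$ from the root and let $v_0,v_1,\dots,v_m=w$ be the unique path from the root to $w$. Because $G$ is a tree, $v_i$ can only be added to the process through its parent $v_{i-1}$, and the definition of the process says that once $v_{i-1}$ has been added, $v_i$ is added at unit rate. By memorylessness of the exponential clocks and the fact that each vertex has exactly one potential predecessor, the successive edge-crossing times $E_i := T(v_i)-T(v_{i-1})$ are independent $\mathsf{Exp}(1)$ random variables, so the addition time $T(w)=\sum_{i=1}^m E_i$ has a $\mathsf{Gamma}(m,1)$ distribution. Since $B_m=\min\{T(w):\operatorname{dist}(w,\text{root})=m\}$ is the minimum over the $d^m$ such vertices, a union bound gives $\mathbb{P}\{B_m<t\}\leq d^m\cdot\mathbb{P}\{\mathsf{Gamma}(m,1)<t\}$.

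Next I would bound the Gamma distribution function near zero. Dropping the factor $e^{-s}\leq 1$ inside the density yields the clean estimate $\mathbb{P}\{\mathsf{Gamma}(m,1)<t\}=\int_0^t \frac{s^{m-1}e^{-s}}{(m-1)!}\,ds\leq \frac{t^m}{m!}$. Combining this with Stirling's lower bound $m!\geq (m/e)^m$ gives $\mathbb{P}\{B_m<t\}\leq (edt/m)^m$. Finally, choosing $t:=m/(2ed)$ makes $edt/m=1/2$, so $\mathbb{P}\{B_m<m/(2ed)\}\leq 2^{-m}\leq e^{-m\ln 2}\leq e^{-m/2}$, which is exactly the claimed bound.

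I do not expect a genuine obstacle here, since the argument is a union bound backed by the Gamma tail estimate. The only points deserving care are the justification that the edge-crossing times along a fixed root-to-$w$ path are truly independent $\mathsf{Exp}(1)$ variables—this relies on the strong Markov property together with the tree structure, which rules out any vertex being infected from two directions—and keeping track of the constant so that the exponent emerges as $-m/2$ (via $2^{-m}\leq e^{-m/2}$) rather than a weaker rate.
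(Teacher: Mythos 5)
Your proof is correct, but it takes a genuinely different route from the paper's. The paper bounds the Laplace transform $\mathbb{E}e^{-\theta B_m}$ via an exponential-moment (Chernoff) argument, invoking a martingale identity from the branching-process literature to factor $\mathbb{E}\bigl[\sum_{r} e^{-\theta B_{m,r}}\bigr]$ as an $m$-th power of a one-generation quantity, then computes that quantity from the order statistics of the exponential clocks and optimizes $\theta$. You instead exploit the tree structure directly: each generation-$m$ vertex has a unique ancestral path, its addition time is an explicit $\mathsf{Gamma}(m,1)$ sum of i.i.d.\ $\mathsf{Exp}(1)$ edge-crossing times, and a union bound over the $d^m$ such vertices combined with the elementary tail estimate $\mathbb{P}\{\mathsf{Gamma}(m,1)<t\}\leq t^m/m!$ and Stirling's bound $m!\geq(m/e)^m$ gives $(edt/m)^m$, which at $t=m/(2ed)$ yields $2^{-m}\leq e^{-m/2}$. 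Both arguments are at bottom first-moment computations, but yours is self-contained and elementary (no citation needed, and in fact a slightly stronger bound, $2^{-m}$ versus $e^{-m/2}$), while the paper's Laplace-transform identity is the standard tool for first-birth problems and generalizes more readily to settings without the exact path-counting and Gamma structure that the deterministic $d$-ary tree provides. Your flagged point of care—that the edge-crossing times along a fixed path are independent $\mathsf{Exp}(1)$ because each vertex has exactly one potential infector—is exactly right and is what makes the reduction to first-passage percolation with i.i.d.\ exponential edge weights legitimate.
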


\begin{proof}
Let $B_{m,r}$ denote the $r$th time at which a vertex of distance $m$ from the root vertex of $G$ is added to $H_t$, and note that $B_m=B_{m,1}$.
Let $c,\theta>0$ be given (to be selected later).
Then Markov's inequality gives
\begin{align*}
\mathbb{P}\{B_m\leq cm\}
=\mathbb{P}\{e^{-\theta B_m}\geq e^{-\theta cm}\}
&\leq e^{\theta cm}\cdot\mathbb{E}e^{-\theta B_m}\\
&\leq e^{\theta cm}\cdot\mathbb{E}\bigg[\sum_{r=1}^de^{-\theta B_{m,r}}\bigg]
= e^{\theta cm}\cdot\Bigg(\mathbb{E}\bigg[\sum_{r=1}^de^{-\theta B_{1,r}}
\bigg]\Bigg)^m,
\end{align*}
where the last identity, which appears in Theorem~1 in~\cite{Kurtz:71}, follows from analyzing a certain martingale.
In our case, $X_0:=B_{1,1}\sim\mathsf{Exp}(d)$ and $X_r:=B_{1,r+1}-B_{1,r}\sim\mathsf{Exp}(d-r)$ for $r\in\{1,\ldots, d-1\}$ are independent.
It follows that
\[
\mathbb{E}e^{-\theta B_{1,r}}
=\mathbb{E}e^{-\theta \sum_{k=0}^{r-1}X_k}
=\prod_{k=0}^{r-1}\mathbb{E}e^{-\theta X_k}
=\prod_{k=0}^{r-1}\frac{d-k}{d-k+\theta}
\leq\Big(\frac{d}{d+\theta}\Big)^r,
\]
and so
\[
\mathbb{E}\bigg[\sum_{r=1}^de^{-\theta B_{1,r}}
\bigg]
\leq \sum_{r=1}^d\Big(\frac{d}{d+\theta}\Big)^r
\leq \frac{1}{1-\frac{d}{d+\theta}}-1
=\frac{d}{\theta}.
\]
Overall,
\[
\mathbb{P}\{B_m\leq cm\}
\leq e^{\theta cm}\cdot(d/\theta)^m
=e^{-m(\log(\theta/d)-c\theta)}.
\]
Selecting $\theta=ed$ and $c=1/(2ed)$ gives the result.
\end{proof}

\begin{lemma}
\label{lem.bound on e1c given e0}
Suppose $G$ is $(r,\eta)$-locally tree-like, consider the SIR process on $G$ with $\mu=0$, and take any $t_1\leq \tfrac{r-2}{2e(d-1)\lambda}$.
Then $\mathbb{P}(\mathcal{E}_1^c|\mathcal{E}_0)\leq 2e^{-(r-2)/2}$.
\end{lemma}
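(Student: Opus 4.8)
The plan is to recast the failure event $\mathcal{E}_1^c$ as a first-passage event on a tree and then feed it into Lemma~\ref{lem.branching process}. First I would use that on $\mathcal{E}_0$ the ball $B(r)$ is a tree rooted at the single vertex $v$ with $U(0)=\{v\}$, and that with $\mu=0$ the SIR dynamics on this tree are exactly first-passage percolation: an infected vertex infects each susceptible neighbor at rate $\lambda$, and since a tree has unique paths, the infection time of any $u$ with $\mathrm{dist}(v,u)\le r$ is $T(u)=\sum_{e}E_e$, a sum of independent $\mathsf{Exp}(\lambda)$ weights along the path from $v$ to $u$ (with $\mu=0$ there is no recovery to complicate this). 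Writing $B_m:=\min_{\mathrm{dist}(v,u)=m}T(u)$ for the first time the infection reaches distance $m$, I would argue the inclusion $\mathcal{E}_1^c\subseteq\{B_r\le t_1\}$ (in fact an equality on $\mathcal{E}_0$): for $U(t_1)\cup\partial U(t_1)$ to contain a vertex at distance $>r$, some vertex at distance $r$ must already be infected, because any infected or boundary vertex at distance $\ge r+1$ forces an infected vertex at distance $\ge r$ on its unique tree-path back to $v$. Hence $\mathbb{P}(\mathcal{E}_1^c\mid\mathcal{E}_0)\le\mathbb{P}(B_r\le t_1\mid\mathcal{E}_0)$.

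Next I would exploit the branching structure of $B(r)$. The root $v$ has $d$ children, and the subtree hanging below each child $w$ is a copy of the infinite $(d-1)$-ary tree; reaching distance $r$ from $v$ is the same as reaching depth $r-1$ from $w$ in one of these $d$ subtrees. Thus $B_r=\min_w\,(E_{vw}+B^{(w)}_{r-1})$, where $B^{(w)}_{r-1}$ is the intrinsic depth-$(r-1)$ first-passage time inside the $(d-1)$-ary subtree rooted at $w$, and these times are independent and identically distributed across the $d$ children (they depend on disjoint edge sets). Lemma~\ref{lem.branching process}, applied with $d$ replaced by $d-1$ and rescaled in time by $1/\lambda$ to turn unit-rate edges into $\mathsf{Exp}(\lambda)$ edges, then gives $\mathbb{P}\{B^{(w)}_{r-1}\le (r-1)/(2e(d-1)\lambda)\}\le e^{-(r-1)/2}$ for each single child $w$.

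It remains to combine the $d$ subtrees without paying an extra factor of $d$ in the leading constant. Rather than a naive union bound (which costs a factor $d$), I would rerun the Laplace-transform argument of Lemma~\ref{lem.branching process} directly on $B_r$: bounding $e^{-\theta\min_w(\cdot)}\le\sum_w e^{-\theta(\cdot)}$ and using independence yields $\mathbb{E}[e^{-\theta B_r}]\le d\cdot\tfrac{\lambda}{\lambda+\theta}\cdot\big((d-1)\lambda/\theta\big)^{r-1}$, and taking $\theta=e(d-1)\lambda$ lets the prefactor $\tfrac{d\lambda}{\lambda+\theta}$ be absorbed (since $\tfrac{d}{d-1}\le 2$ for $d\ge 3$) so that Markov's inequality at the threshold $t_1\le (r-2)/(2e(d-1)\lambda)$ produces a bound of the claimed form $2\,e^{-(r-2)/2}$. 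The main obstacle is this bookkeeping: correctly reducing $\mathcal{E}_1^c$ through the boundary $\partial U(t_1)$ to $\{B_r\le t_1\}$ while keeping in mind that the distance-$r$ leaves of $B(r)$ have unknown structure in $G$ beyond radius $r$ even on $\mathcal{E}_0$ (their escaping neighbors sit at distance exactly $r+1$, which is all that is needed), and then handling the root's extra child so that the branching factor $d-1$, not $d$, governs the exponential rate while the leading constant stays at $2$. The loss of two units in depth (the shift from $r$ to $r-2$) is precisely the slack I would spend to absorb the root step and this rounding, and it is harmless since in the main theorem $t_1=\beta\log_{d-1}n$ with $\beta<\tfrac{c}{2e(d-1)\lambda}$ already satisfies $t_1\le (r-2)/(2e(d-1)\lambda)$ for large $n$.
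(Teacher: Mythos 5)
Your proposal is correct, and it differs from the paper's proof precisely at the one step that requires a trick: how to reduce to Lemma~\ref{lem.branching process} without paying a factor of $d$ for the root's $d$ subtrees. You decompose at the root into its $d$ child subtrees and rerun the Laplace-transform argument of Lemma~\ref{lem.branching process} with the extra root-edge factor, getting $\mathbb{E}[e^{-\theta B_r}]\le \frac{d\lambda}{\lambda+\theta}\left(\frac{(d-1)\lambda}{\theta}\right)^{r-1}$; at $\theta=e(d-1)\lambda$ the prefactor is $\frac{d}{1+e(d-1)}<1$, so Markov's inequality at $t_1\le\frac{r-2}{2e(d-1)\lambda}$ yields roughly $\frac{d}{e(1+e(d-1))}\,e^{-(r-2)/2}$, which is even smaller than the claimed $2e^{-(r-2)/2}$. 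The paper avoids the factor of $d$ by a different decomposition: it waits for the first infection, at time $T\sim\mathsf{Exp}(d\lambda)$, at which point exactly two vertices $u,v$ are infected; deleting the edge $\{u,v\}$ splits the tree into just \emph{two} $(d-1)$-ary trees rooted at $u$ and $v$, any vertex at distance $\ge r$ from $U(0)$ lies at distance $\ge r-2$ from one of these two roots, and by memorylessness each subtree runs a fresh first-passage process from time $T$ onward; Lemma~\ref{lem.branching process} (with $d-1$ in place of $d$ and time rescaled so that $\lambda=1$) then applies as a black box to each subtree, and the two-term union bound is exactly the source of the constant $2$. The trade-off: the paper's split-at-the-first-infected-edge keeps Lemma~\ref{lem.branching process} as a black box, at the cost of a restart-at-$T$ argument and the union-bound factor $2$, whereas your root decomposition avoids any union bound over subtrees and gives a sharper constant, but—as your third paragraph correctly anticipates—it cannot cite Lemma~\ref{lem.branching process} as stated and must instead reuse the moment-generating-function bound from inside its proof. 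Both routes share the same initial reduction (on $\mathcal{E}_0$ with $\mu=0$ the process is first-passage percolation with i.i.d.\ $\mathsf{Exp}(\lambda)$ edge weights, and $\mathcal{E}_1^c$ forces the infection to reach distance $r$ by time $t_1$) and both spend the same two units of depth as slack.
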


\begin{proof}
By time dilation, we may take $\lambda=1$ without loss of generality.
Condition on $\mathcal{E}_0$.
After time $T\sim\mathsf{Exp}(d)$, there are two infected vertices $u$ and $v$.
Removing the edge $\{u,v\}$ from $H$ produces two $(d-1)$-ary trees, with root vertices $u$ and $v$.
Extend these trees to infinite $(d-1)$-ary trees $H_u$ and $H_v$.
Let $B_u$ denote the first time at which a vertex of distance $r-2$ from the root vertex is infected in $H_u$, and similarly for $B_v$.
Then by assumption on $t_1$, we have
\begin{align*}
\mathcal{E}_1^c\cap\mathcal{E}_0
&\subseteq
\Big(\{T+B_u<t_1\}\cup\{T+B_v<t_1\}\Big)\cap\mathcal{E}_0\\
&\subseteq
\Big(\{B_u<\tfrac{r-2}{2e(d-1)}\}\cup\{B_v<\tfrac{r-2}{2e(d-1)}\}\Big)\cap\mathcal{E}_0.
\end{align*}
Finally, we apply the union bound and Lemma~\ref{lem.branching process} to get
\[
\mathbb{P}(\mathcal{E}_1^c|\mathcal{E}_0)
\leq\mathbb{P}(\{B_u<\tfrac{r-2}{2e(d-1)}\}|\mathcal{E}_0)+\mathbb{P}(\{B_v<\tfrac{r-2}{2e(d-1)}\}|\mathcal{E}_0)
\leq 2e^{-(r-2)/2}.
\qedhere
\]
\end{proof}

Overall, \eqref{eq.bound on third term} and Lemma~\ref{lem.bound on e1c given e0} together give $\mathbb{P}(\mathcal{E}_{1}^{c}|\mathcal{E}_{0}\cap \mathcal{E}_{\infty})=o(1)$.
Next, we may combine this bound with \eqref{simple bound 2} to bound the second term in \eqref{eq.terms to bound}:
\begin{align}
\label{eq.bound second term 1}
\mathbb{P}(\mathcal{E}_2^c|\mathcal{E}_0\cap\mathcal{E}_1\cap\mathcal{E}_\infty)\le \frac{\mathbb{P}(\mathcal{E}_{2}^{c}|\mathcal{E}_{0}\cap \mathcal{E}_{\infty})}{1 - \mathbb{P}(\mathcal{E}_{1}^{c}|\mathcal{E}_{0}\cap \mathcal{E}_{\infty})} = \frac{\mathbb{P}(\mathcal{E}_{2}^{c}|\mathcal{E}_{0}\cap \mathcal{E}_{\infty})}{1 - o(1)} \le \frac{\mathbb{P}(\mathcal{E}_{2}^{c}\cap \mathcal{E}_{0}\cap \mathcal{E}_{\infty})}{(1- o(1))\kappa}.
\end{align}
Considering $\mathbb{P}(\mathcal{E}_{2}^{c}\cap \mathcal{E}_{0}\cap \mathcal{E}_{\infty})\leq\mathbb{P}(\mathcal{E}_2^c\cap\mathcal{E}_\infty)\leq\mathbb{P}(\mathcal{E}_2^c|\mathcal{E}_\infty)$, it suffices to show that $\mathbb{P}(\mathcal{E}_2^c|\mathcal{E}_\infty)=o(1)$.
To this end, it is convenient to consider the stopping time $T_0:=\inf\{t:|U(t)|\geq k\}$, noting that $\{T_{0} > t_{0}\}=\mathcal{E}_{2}^{c}$.  
Since $k = \frac{\alpha(\mu+\lambda)}{4}r = \frac{c(\mu+\lambda)}{4}t_{0}$ and $c<1$, it follows that
\begin{equation}
\label{eq.bound second term1}
\mathbb{P}(\mathcal{E}_2^c | \mathcal{E}_{\infty})
= \mathbb{P}(\{T_{0} > t_{0}\}|\mathcal{E}_{\infty})
= \mathbb{P}(\{T_{0} > \tfrac{4k}{c(\mu+\lambda)}\}|\mathcal{E}_{\infty})
\leq \mathbb{P}(\{T_{0}\geq \tfrac{4k}{\mu+\lambda}\}|\mathcal{E}_{\infty}).
\end{equation}
Next, our assumptions that $\alpha<\frac{c}{2e(d-1)\lambda}$, $(d-2)\lambda>\mu$, and $c<1$ together imply
\[
\tfrac{\alpha(\mu+\lambda)}{4}
<\tfrac{c}{2e(d-1)\lambda}\cdot\tfrac{\mu+\lambda}{4}
<\tfrac{c}{8e}
<1,
\]
from which it follows that $k<r$.
On the event $\mathcal{E}_{\infty}$, it holds that $U(\infty)$ induces a tree that contains a path of length greater than $r$, from which is follows that $|U(\infty)|>r$.
As such, $\mathcal{E}_{\infty}\subseteq \{|U(\infty)|> r\}\subseteq \{|U(\infty)|\geq k\}= \{T_{0} < \infty\}$.
This allows us to continue \eqref{eq.bound second term1}:
\begin{align}
\mathbb{P}(\mathcal{E}_2^c | \mathcal{E}_{\infty})
\nonumber
&\leq\frac{\mathbb{P}(\{T_{0}\geq \tfrac{4k}{\mu+\lambda}\}\cap\mathcal{E}_{\infty})}{\mathbb{P}(\mathcal{E}_{\infty})}\\
\label{eq.bound on second term2}
&\leq\frac{\mathbb{P}(\{T_{0}\geq \tfrac{4k}{\mu+\lambda}\}\cap\{T_{0} < \infty\})}{\mathbb{P}(\mathcal{E}_0\cap\mathcal{E}_{\infty})}
\leq\frac{1}{\kappa}\cdot\mathbb{P}(\{T_{0}\geq \tfrac{4k}{\mu+\lambda}\}|\{T_{0} < \infty\}).
\end{align}
To continue, we show that $\mathbb{P}(\{T_{0}\geq \tfrac{4k}{\mu+\lambda}\}|\{T_{0} < \infty\})=o(1)$:

\begin{lemma}
\label{lem.bound first term}
Put $T_{0} := \inf\{t\geq 0:|U(t)| \geq k\}$.
Then
\[
\mathbb{P}(\{T_0> \tfrac{4k}{\mu+\lambda}\}|\{T_0<\infty\})
\leq e^{-c_1 k},
\]
where $c_1>0$ is a universal constant.
\end{lemma}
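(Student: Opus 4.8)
The plan is to bound $T_0$ pathwise by a sum of at most $2k$ i.i.d.\ standard exponentials, and then apply a Chernoff bound. The two facts that make this work are that the counting process $|U(t)|$ is nondecreasing with unit jumps, and that, conditioned on eventual spread, the total transition rate out of every state visited before level $k$ is at least $\lambda+\mu$. Notably this argument uses nothing about the graph being a tree or $d$-regular, consistent with the lemma being stated for a general SIR process.

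First I would separate the randomness of the continuous-time chain into its embedded jump chain $\mathcal{J}=(x_0,x_1,\dots)$ and an independent sequence $W_1,W_2,\dots$ of i.i.d.\ $\mathsf{Exp}(1)$ variables supplying the holding-time randomness, so that the $j$-th holding time equals $W_j/R(x_{j-1})$, where $R(x)=\lambda\,e(x)+\mu\,i(x)$ is the total rate out of state $x$, with $e(x)$ the number of $S$--$I$ edges and $i(x)=|I|$. The point of this standard construction is that both $\{T_0<\infty\}$ and the number $M$ of jumps needed to first reach $|U|=k$ are measurable with respect to $\mathcal{J}$, while the $W_j$ remain independent of $\mathcal{J}$. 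This is exactly the device that lets me handle the conditioning on $\{T_0<\infty\}$ cleanly: I will condition on $\mathcal{J}$ and obtain a bound uniform over $\mathcal{J}\in\{T_0<\infty\}$.

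Next I would establish two deterministic facts on $\{T_0<\infty\}$. First, since $|U|$ increases by exactly one at each infection and never decreases, reaching level $k$ from level $1$ requires exactly $k-1$ infection jumps, and the number of recovery jumps before $T_0$ equals $|R|$ at the state preceding $T_0$, which is at most $|U|\le k-1$; hence $M\le 2k-2\le 2k$. Second, at every visited state with $|U|<k$ we must have $e\ge 1$ and $i\ge 1$: if $i=0$ the state is absorbing, while if $i\ge1$ but $e=0$ then only recoveries can occur, which preserve $e=0$ and eventually exhaust $I$; either way the chain never reaches level $k$, contradicting $T_0<\infty$. Consequently $R(x_{j-1})\ge\lambda+\mu$ for all $j\le M$, so that
\[
T_0=\sum_{j=1}^{M}\frac{W_j}{R(x_{j-1})}
\le\frac{1}{\lambda+\mu}\sum_{j=1}^{M}W_j
\le\frac{1}{\lambda+\mu}\sum_{j=1}^{2k}W_j.
\]

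Finally, since the $W_j$ are independent of $\mathcal{J}$, conditioning on any $\mathcal{J}\in\{T_0<\infty\}$ gives
\[
\mathbb{P}\Big(T_0>\tfrac{4k}{\lambda+\mu}\,\Big|\,\mathcal{J}\Big)
\le\mathbb{P}\Big(\sum_{j=1}^{2k}W_j>4k\Big),
\]
and averaging over $\mathcal{J}$ given $\{T_0<\infty\}$ preserves this bound. The right-hand side is a standard large-deviation estimate for a $\mathrm{Gamma}(2k,1)$ variable of mean $2k$: a Chernoff bound with the optimal parameter $\theta=\tfrac12$ yields $\mathbb{P}(\sum_{j=1}^{2k}W_j>4k)\le e^{-2(1-\log 2)k}$, so the lemma holds with the universal constant $c_1=2(1-\log 2)$. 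I expect the only genuinely delicate step to be justifying the rate bound $R\ge\lambda+\mu$ together with the jump-chain/holding-time split that legitimizes conditioning on the $\mathcal{J}$-measurable event $\{T_0<\infty\}$; the counting of jumps and the final Chernoff estimate are routine.
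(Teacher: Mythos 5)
Your proof is correct, and its skeleton coincides with the paper's: both arguments condition on the embedded jump chain, observe that on $\{T_0<\infty\}$ every state visited before level $k$ must have at least one infected vertex and at least one $S$--$I$ edge (so the total transition rate is at least $\lambda+\mu$), and count at most $2k$ jumps before $T_0$. Where you genuinely diverge is the finishing concentration step. The paper centers the holding times, writes $T_0-\sum_n\mathbb{E}X_n$, and invokes Bernstein's inequality for subexponential random variables (Theorem~2.8.1 in~\cite{Vershynin:18}), which leaves $c_1$ as an unspecified universal constant built from $\psi_1$-norm constants; it also requires a slightly delicate application of Bernstein with the random jump count $N$ inside the conditional measure. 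You instead dominate $T_0$ pathwise by $\tfrac{1}{\lambda+\mu}\sum_{j=1}^{2k}W_j$ with $W_j$ i.i.d.\ $\mathsf{Exp}(1)$ independent of the jump chain, and then do an exact Chernoff computation for the $\mathrm{Gamma}(2k,1)$ tail, yielding the explicit constant $c_1=2(1-\log 2)$; your optimization at $\theta=\tfrac12$ is correct. Your route buys elementarity (no external concentration inequality), an explicit constant, and a cleaner treatment of the conditioning, since you make explicit that $\{T_0<\infty\}$ and $M$ are measurable with respect to the jump chain while the $W_j$ remain independent of it --- a point the paper uses implicitly. The paper's Bernstein-based route is more generic: it would survive modifications where the holding times are only subexponential rather than exactly exponential, but for this lemma nothing is lost by your sharper, more self-contained argument.
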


\begin{proof}
The random number $N$ of transitions that occur over the interval $(0,T_0]$ is given by $N=|U(T_0)|+|R(T_0)|-1\le 2|U(T_{0})|-1\le 2k$.
Conditioned on the sequence of states of the discrete time Markov chain $\{M_n\}_{n\geq0}$, the transition times $\{X_n\}_{n\geq1}$ are independent and exponentially distributed with (deterministic) parameters $\lambda e(I_n,S_n)+\mu|I_n|$.
For any sequence of states in the event $\{T_0<\infty\}$, the first $N$ of these parameters are all at least $\mu+\lambda$.
Put $\mathcal{E}:=\{T_0<\infty\}$ and denote $\mathbb{P}_\mathcal{E}(A):=\mathbb{P}(A|\mathcal{E})$.
Then
\[
\mathbb{P}_\mathcal{E}\{T_0> t_0 \}
=\mathbb{P}_\mathcal{E}\bigg\{\sum_{n=1}^N X_n> t_0 \bigg\}\\
\leq\mathbb{P}_\mathcal{E}\bigg\{\sum_{n=1}^N (X_n-\mathbb{E}X_n)> t_0-\frac{2k}{\mu+\lambda}\bigg\}.
\]
Put $t_0:=\frac{4k}{\mu+\lambda}$.
Conditioning on $N$, we may apply Bernstein's inequality for subexponential random variables (see Theorem~2.8.1 in~\cite{Vershynin:18}).
In particular, we let $C>0$ denote a universal constant such that any random variable of the form $X\sim\mathsf{Exp}(\lambda)$ satisfies $\|X-\mathbb{E}X\|_{\psi_1}\leq\frac{C}{\lambda}$.
Then
\begin{align*}
\mathbb{P}_\mathcal{E}\{T_0> t_0 \}
&\leq\mathbb{P}_\mathcal{E}\bigg\{\Big|\sum_{n=1}^N (X_n-\mathbb{E}X_n)\Big|> t_0-\frac{2k}{\mu+\lambda} \bigg\}\\
&\leq\mathbb{E}_\mathcal{E}\bigg[\operatorname{exp}\bigg(-c\min\Big(\frac{4k^{2}}{N C^2},\frac{2k}{C}\Big)\bigg)\bigg]
\leq e^{-c_1 k},
\end{align*}
where the last step applies the bound $N\leq 2k$ and $c_1:=\frac{2c}{\max(C^2,C)}$.
\end{proof}

Overall, \eqref{eq.bound second term 1}, \eqref{eq.bound on second term2} and Lemma~\ref{lem.bound first term} together give that the second term in \eqref{eq.terms to bound} is $o(1)$.
It remains to show that the first term in \eqref{eq.terms to bound} is $o(1)$.
To this end, we first show that $\mathbb{P}(\mathcal{E}_{0}\cap \mathcal{E}_{1}\cap \mathcal{E}_{2}\cap \mathcal{E}_{\infty})\geq(1-o(1))\kappa$.
Notice that
\[
\mathbb{P}(\mathcal{E}_2^c|\mathcal{E}_0\cap\mathcal{E}_\infty)
=\frac{\mathbb{P}(\mathcal{E}_2^c\cap\mathcal{E}_0\cap\mathcal{E}_\infty)}{\mathbb{P}(\mathcal{E}_0\cap\mathcal{E}_\infty)}
\leq\frac{\mathbb{P}(\mathcal{E}_2^c|\mathcal{E}_\infty)}{\kappa}
=o(1),
\]
where the last step applies \eqref{eq.bound on second term2} and Lemma~\ref{lem.bound first term}.
Combining this with \eqref{eq.bound on third term} and Lemma~\ref{lem.bound on e1c given e0} after a union bound then gives
\begin{align*}
\mathbb{P}(\mathcal{E}_0\cap\mathcal{E}_1\cap\mathcal{E}_2\cap\mathcal{E}_\infty)
&=\mathbb{P}(\mathcal{E}_1\cap\mathcal{E}_2|\mathcal{E}_0\cap\mathcal{E}_\infty)\cdot\mathbb{P}(\mathcal{E}_0\cap\mathcal{E}_\infty)\\
&\geq\kappa\cdot\mathbb{P}(\mathcal{E}_1\cap\mathcal{E}_2|\mathcal{E}_0\cap\mathcal{E}_\infty)\\
&\geq\kappa\cdot(1-\mathbb{P}(\mathcal{E}_1^c|\mathcal{E}_0\cap\mathcal{E}_\infty)-\mathbb{P}(\mathcal{E}_2^c|\mathcal{E}_0\cap\mathcal{E}_\infty))
=(1-o(1))\kappa,
\end{align*}
as claimed.
As such, it suffices to show that $\mathbb{P}(\mathcal{F}\cap\mathcal{E}_{0}\cap \mathcal{E}_{1}\cap \mathcal{E}_{2}\cap \mathcal{E}_{\infty})$ is $o(1)$.
For this, it is convenient to consider the event $\mathcal{A}_{t_{0}} = \{e(I(t_{0}),S(t_{0})) > 0\}$ that the infection is still spreading at time $t_{0}$.
Since $\mathcal{E}_{1}$ is the event that the infection stays within $B(r)$ up to time $t_1$ (i.e., after $t_0$) and $\mathcal{E}_{\infty}$ is the event that the infection eventually escapes $B(r)$, it follows that $\mathcal{E}_{1}\cap \mathcal{E}_{\infty}\subseteq \mathcal{A}_{t_{0}}$.  
Since $\lambda\geq6\mu$ and $d\geq3$, we may select any $\epsilon\in(\frac{2\mu}{\mu+\lambda},1-\frac{2}{d})$; we will refine our choice later.
Defining $\mathcal{B}_{t_0}:=\{|R(t_{0})|< \epsilon |U(t_{0})|\}$, we then have
\begin{align}
\mathbb{P}(\mathcal{F}\cap\mathcal{E}_{0}\cap \mathcal{E}_{1}\cap \mathcal{E}_{2}\cap \mathcal{E}_{\infty})
\nonumber
&\leq \mathbb{P}(\mathcal{E}_{1}\cap \mathcal{E}_2\cap\mathcal{E}_{\infty}\cap\mathcal{B}_{t_0}^c)+\mathbb{P}(\mathcal{F}\cap\mathcal{E}_{0}\cap \mathcal{E}_{1}\cap \mathcal{E}_{2}\cap\mathcal{B}_{t_0})\\
\label{eq.two more terms to bound}
&\leq \mathbb{P}(\mathcal{E}_2\cap\mathcal{A}_{t_0}\cap\mathcal{B}_{t_0}^c)+\mathbb{P}(\mathcal{F}\cap\mathcal{E}_{0}\cap \mathcal{E}_{1}\cap \mathcal{E}_{2}\cap\mathcal{B}_{t_0}).
\end{align}
We bound the first term of \eqref{eq.two more terms to bound} by analyzing the underlying discrete time Markov chain:

\begin{lemma}
\label{lem.bound second term}
Select $\epsilon > \frac{2\mu}{\mu+\lambda}$.
Then $\mathbb{P}(\mathcal{E}_2\cap\mathcal{A}_{t_0}\cap\mathcal{B}_{t_0}^c)\leq \frac{e^{-c(\epsilon)k}}{1-e^{-c(\epsilon)}}$, where $c(\epsilon)>0$.
\end{lemma}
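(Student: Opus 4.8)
The plan is to pass to the embedded discrete-time Markov chain of the SIR process and to control how many of its transitions before time $t_0$ are recoveries. Every infection raises $|U|$ by one and leaves $|R|$ fixed, whereas every recovery raises $|R|$ by one and leaves $|U|$ fixed; hence if $|U(t_0)|=u$ then exactly $u-1$ infections have fired by time $t_0$, and $\mathcal{B}_{t_0}^c$ asks that at least $\epsilon u$ recoveries have fired. Decomposing over the value of $|U(t_0)|$, which ranges over integers $\geq k$ on $\mathcal{E}_2$, I would write
\[
\mathbb{P}(\mathcal{E}_2\cap\mathcal{A}_{t_0}\cap\mathcal{B}_{t_0}^c)
=\sum_{u\geq k}\mathbb{P}(\{|U(t_0)|=u\}\cap\mathcal{A}_{t_0}\cap\{|R(t_0)|\geq\epsilon u\})
\]
and aim to bound each summand by $e^{-c(\epsilon)u}$; the geometric series then sums to $\frac{e^{-c(\epsilon)k}}{1-e^{-c(\epsilon)}}$, matching the claim. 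This is the source of the $\frac{1}{1-e^{-c(\epsilon)}}$ factor.

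The heart of the matter is a per-step bound on the recovery probability. Conditioned on the current state, the next transition is a recovery with probability $\frac{\mu|I|}{\lambda e(I,S)+\mu|I|}$, and on $\mathcal{A}_{t_0}$ the infection is spreading on all of $[0,t_0]$ (once $e(I,S)=0$ it stays $0$), so $e(I,S)\geq1$ at each relevant step and $|I(t_0)|\geq1$. The key is to upgrade $e(I,S)\geq1$ to a linear lower bound. Since $U$ is connected and, in the tree-like regime, induces a tree, each infected non-root vertex $v$ sends $d-1-c_U(v)$ edges into $S$, where $c_U(v)$ counts its children in $U$; summing over $v\in I$ and using $\sum_{v\in I}c_U(v)\leq|U|-1$ gives $e(I,S)\geq(d-2)|I|-|R|+1$. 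On the controlled regime $\{|R|<\epsilon|U|\}$ one has $|R|<\frac{\epsilon}{1-\epsilon}|I|$, so with $\epsilon<1-\frac{2}{d}$ this forces $e(I,S)\geq c_{d,\epsilon}|I|$ for a positive constant, and hence the recovery probability is at most some $\rho=\rho(d,\epsilon,\lambda,\mu)$. The standing hypotheses are exactly what make $\frac{\rho}{1-\rho}<\epsilon$: for large $d$ one gets $\rho=\frac{\mu}{\mu+\lambda}$ and the threshold is $\epsilon>\frac{\mu}{\lambda}$, while the clean condition $\epsilon>\frac{2\mu}{\mu+\lambda}$ (together with $\lambda\geq6\mu$ and $\epsilon<1-\frac{2}{d}$) secures it uniformly in $d$.

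Given such a $\rho$, I would finish each summand with a Chernoff estimate tailored to the fact that $\{|U(t_0)|=u\}$ pins the number of infections at $A=u-1$. Writing $A_n$ and $B_n$ for the infections and recoveries made in the first $n$ transitions, I would use the two-variable supermartingale $Y_n:=e^{sB_n}g(s)^{-A_n}$ with $g(s):=\frac{1-\rho}{1-\rho e^{s}}$ and $0<s<\log(1/\rho)$: while the per-step recovery probability is at most $\rho$, the conditional multiplicative increment $p_ne^{s}+(1-p_n)g(s)^{-1}$ is at most $1$, so $Y_n$ is a supermartingale. Optional stopping at $N(t_0)$ on the event $\{B=\lceil\epsilon u\rceil,\ A=u-1\}$ then yields the bound $\exp\!\big(-u(s\epsilon-\log g(s))\big)$, whose exponent is negative for small $s>0$ precisely when $\frac{\rho}{1-\rho}<\epsilon$; optimizing over $s$ gives the desired $e^{-c(\epsilon)u}$. (Equivalently, the signed potential $\phi_n:=|R_n|-\epsilon|U_n|$ acquires strictly negative drift $p_n(1+\epsilon)-\epsilon<0$ under the same condition, which is the linearized version of this computation.)

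The main obstacle is that the per-step bound is self-referential: the linear lower bound on $e(I,S)$ holds only while recovered vertices are a minority, which is the very conclusion $\mathcal{B}_{t_0}$ we are after. My remedy is to let $\nu:=\inf\{n:|R_n|\geq\epsilon|U_n|\}$ be the first violation of $\mathcal{B}$ and to run the supermartingale only up to $\nu$, where $p_n\leq\rho$ is genuinely in force; the delicate point is then that a trajectory which violates $\mathcal{B}$ early (at small $|U|$) and later reaches $|U(t_0)|\geq k$ must have survived a recovery-majority phase, which is incompatible with the conditioning on $\mathcal{E}_2\cap\mathcal{A}_{t_0}$ (extinction-type suppression), so the dominant contribution is violations at large $|U|$, matching the $e^{-c(\epsilon)u}$ target. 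Making this stopping-time bookkeeping airtight — handling the recoveries between the last infection and $t_0$, the reliance on the induced-tree structure so that the edge-boundary bound applies (effectively working on $\mathcal{E}_0$), and pinning the exact dependence of $\rho$ and $c(\epsilon)$ on $(d,\lambda,\mu,\epsilon)$ — is the part I expect to require the most care.
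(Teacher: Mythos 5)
Your outer skeleton coincides with the paper's: pass to the embedded discrete-time chain, decompose over $u=|U(t_0)|\geq k$ (noting that exactly $u-1$ infections and at least $\lceil\epsilon u\rceil$ recoveries have fired), prove a per-$u$ bound $e^{-c(\epsilon)u}$, and sum the geometric series. The divergence is in how the per-$u$ bound is obtained, and that is where your proposal has a genuine gap. The paper asserts that at \emph{every} transition while the infection is still spreading, the conditional recovery probability is at most $\tfrac{\mu}{\mu+\lambda}$ --- with no tree structure and no reference to $\mathcal{B}_{t_0}$ --- and then needs no martingale at all: it union-bounds over the at most $\binom{2u}{\lceil\epsilon u\rceil}$ possible positions of the recoveries among the at most $2u$ transitions, giving $\binom{2u}{\lceil\epsilon u\rceil}\bigl(\tfrac{\mu}{\mu+\lambda}\bigr)^{\lceil\epsilon u\rceil}\leq e^{-c(\epsilon)u}$. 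Your self-referentiality problem never arises in that argument. (Your instinct that $e(I,S)\geq 1$ alone is insufficient is actually well-founded: the paper's one-line inequality is equivalent to $e(I_n,S_n)\geq|I_n|$, which is not justified there and can fail on a tree when recovered vertices shield interior infected ones. But that is a criticism of the paper's proof, not a repair of yours.)

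The gap: stopping your supermartingale at the first exit $\nu$ from $\{|R|<\epsilon|U|\}$ only controls the probability that the first violation occurs at size $|U_\nu|=u'$, and that bound, $e^{-cu'}$, is $\Theta(1)$ for small $u'$ (e.g.\ $u'=2$: the second transition is a recovery with probability bounded away from $0$, and $1\geq 2\epsilon$ for the relevant $\epsilon$); summing over $u'$ gives $O(1)$, not $e^{-c(\epsilon)k}$. Your proposed rescue --- that early violators are ``incompatible with the conditioning on $\mathcal{E}_2\cap\mathcal{A}_{t_0}$'' --- is not an argument: the lemma bounds a joint probability, and a trajectory that passes through a recovery-heavy phase at small size can perfectly well keep spreading and reach $|U(t_0)|\geq k$ with $|R(t_0)|\geq\epsilon|U(t_0)|$. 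Worse, outside the good region you have no per-step control whatsoever, and the target event admits trajectory skeletons in which almost all recoveries occur at bad states: alternate ``recover until $|R|=|U|-1$'' with ``grow $|U|$ by a factor $1/\epsilon$,'' ending at $(|U|,|R|)=(u,u-1)$; each phase contains $O(1)$ good-state recoveries, so only $O(\log u)$ of the roughly $u$ recoveries are steps to which your bound $\rho$ applies. The best your method extracts for such trajectories is $\rho^{O(\log u)}$, polynomially small in $u$ rather than $e^{-c(\epsilon)u}$. So the missing piece is not stopping-time bookkeeping: bounding recoveries only while $\mathcal{B}$ holds cannot reach the stated estimate, and some uniform-in-state control (as the paper claims) or a genuinely different idea is required. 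Separately, even if this were repaired, your per-step bound needs the tree event $\mathcal{E}_0$ and $\epsilon<1-\tfrac{2}{d}$, so you would be proving a statement weaker than the lemma as written (though one that could be made to serve the theorem's purposes).
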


\begin{proof}
Let $M_n=(S_n,I_n,R_n)$ sequence of states of the SIR model.
For example, $M_0=(S(0),I(0),R(0))$, and $M_1=(S(T),I(T),R(T))$, where $T$ denotes the first transition time.
Almost surely, the end state of this process takes the form $(S,\emptyset,V(G)\setminus S)$.
Importantly, $\{M_n\}_{n\geq0}$ is a discrete time Markov chain in which, conditioned on $M_n$, it holds that $R_{n+1}$ strictly contains $R_n$ with probability
\[
\frac{\mu|I_{n}|}{\lambda e(I_{n},S_{n})+\mu|I_{n}|}.
\]
We will consider this process until the stopping time
\[
N:=\inf\{n:e(I_n,S_n)=0\}.
\]
Specifically, for each $n\in\{1,\ldots,N\}$, let $X_n$ indicate whether the $n$th transition recovers a vertex, i.e., $R_n$ strictly contains $R_{n-1}$.
For each $n>N$, let $X_n$ be Bernoulli with success probability $\frac{\mu}{\mu+\lambda}$, all of which are independent of each other and of $M_N$.
Put $\overline{M}_n:=M_{\min(n,N)}$.
Let $\mathbb{P}_n$ denote the (random) probability measure conditioned on the state history $\{\overline{M}_{j}\}_{j=0}^n$.
Notice that in the event $\{n<N\}$, we have the bound
\[
\mathbb{P}_n\{X_{n+1}=1\}
=\mathbb{P}\{X_{n+1}=1 |M_{n}\}
=\frac{\mu|I_{n}|}{\lambda e(I_{n},S_{n})+\mu|I_{n}|}
\leq\frac{\mu}{\mu +\lambda}
\]
almost surely.
Meanwhile, in the complementary event $\{n\geq N\}$, $X_{n+1}$ is Bernoulli with success probability $\frac{\lambda}{\mu +\lambda}$ and independent of $M_{n}$, and so
\[
\mathbb{P}_n\{X_{n+1}=1\}
=\mathbb{P}\{X_{n+1}=1 |M_{n}\}
=\mathbb{P}\{X_{n+1}=1\}
=\frac{\mu}{\mu +\lambda}
\]
almost surely.
Overall, $\mathbb{P}_n\{X_{n+1}=1\}\leq\frac{\mu}{\mu +\lambda}$ almost surely.
Next, let $J$ denote a set of positive integers $j_1<\cdots<j_m$.
Then the law of total probability gives
\[
\mathbb{P}\{X_j=1~\forall j\in J\}
=\mathbb{E}[\mathbb{P}_{j_m-1}\{X_j=1~\forall j\in J\}].
\]
Next, $\mathbb{P}_{j_m-1}\{X_j=1~\forall j\in J\setminus\{j_m\}\}\in\{0,1\}$, and so
\begin{align*}
\mathbb{P}_{j_m-1}\{X_j=1~\forall j\in J\}
&=\mathbb{P}_{j_m-1}\{X_j=1~\forall j\in J\setminus\{j_m\}\}\cdot\mathbb{P}_{j_m-1}\{X_{j_m}=1\}\\
&\leq \mathbb{P}_{j_m-1}\{X_j=1~\forall j\in J\setminus\{j_m\}\}\cdot\tfrac{\mu}{\mu +\lambda}.
\end{align*}
Take expectations of both sides and apply induction to get
\[
\mathbb{P}\{X_j=1~\forall j\in J\}
\leq \mathbb{P}\{X_j=1~\forall j\in J\setminus\{j_m\}\}\cdot\tfrac{\mu}{\mu +\lambda}
\leq
\cdots
\leq (\tfrac{\mu}{\mu +\lambda})^m.
\]
Next, let $N_0$ denote the number of transitions that have occurred by time $t_{0}$.  In the event $\mathcal{A}_{t_{0}}$, it holds that $N_0<N$, and so $|R(t_{0})|=\sum_{j=1}^{N_0} X_j$.
Also, we have $N_0\leq 2|U(t_0)|$.
As such,
\begin{align*}
\mathbb{P}(\{|U(t_0)|=u\}\cap\mathcal{A}_{t_{0}}\cap\{|R(t_{0})|\geq \epsilon u\})
&\leq \mathbb{P}\Big\{\sum_{j=1}^{2u} X_j\geq \epsilon u\Big\}\\
&\leq \sum_{\substack{J\subseteq[2u]\\|J|=\lceil \epsilon u\rceil}}\mathbb{P}\{X_j=1~\forall j\in J\}\\
&\leq\binom{2u}{\lceil \epsilon u\rceil}\Big(\frac{\mu}{\mu+\lambda}\Big)^{\lceil \epsilon u\rceil}\\
&\leq\operatorname{exp}\Big(-\lceil \epsilon u\rceil\Big(\log(\tfrac{\lambda}{\mu}+1)-\log(\tfrac{2u}{\lceil \epsilon u\rceil})\Big)\Big)\\
&\leq\operatorname{exp}\Big(- \epsilon u\Big(\log(\tfrac{\lambda}{\mu}+1)-\log(\tfrac{2}{ \epsilon})\Big)\Big)\\
&=: e^{-c(\epsilon) u},
\end{align*}
where $c(\epsilon)>0$ since $\epsilon> \frac{2\mu}{\mu+\lambda}$.
This then gives
\begin{align*}
\mathbb{P}(\mathcal{E}_2\cap\mathcal{A}_{t_0}\cap\mathcal{B}_{t_0}^c)
&=\mathbb{P}(\{|U(t_0)|\geq k\}\cap\mathcal{A}_{t_0}\cap\{|R(t_0)|\geq \epsilon|U(t_0)|\})\\
&=\sum_{u\geq k}\mathbb{P}(\{|U(t_0)|=u\}\cap\mathcal{A}_{t_{0}}\cap\{|R(t_{0})|\geq \epsilon u\})
\leq\sum_{u\geq k}e^{-c(\epsilon)u}
=\frac{e^{-c(\epsilon)k}}{1-e^{-c(\epsilon)}},
\end{align*}
as desired.
\end{proof}

Overall, Lemma~\ref{lem.bound second term} gives that the first term in \eqref{eq.two more terms to bound} is $o(1)$.
It remains to bound to second term in \eqref{eq.two more terms to bound}.
To do so, we restrict to the event $\mathcal{E}_{0}\cap \mathcal{E}_{1}\cap \mathcal{E}_{2}\cap\mathcal{B}_{t_0}$ and argue that $\mathcal{F}$ occurs with probability $o(1)$.
We adopt the notation from Lemma~\ref{lem.key lemma} and Algorithm~\ref{alg.main_algorithm}.
Since $B(r)$ is a tree, $A$ consists of the vertices in $U(t_{0})$ that have a neighbor in $S(t_{0})$, while $A_I=A\cap I(t_{0})$.
For every $a\in A\setminus A_I\subseteq R(t_0)$, since $a$ cannot infect $b(a)$, it holds that $Z_{a}=\tau$.
It follows that $Q=Q_I$ and $P=\frac{|A_I|}{|A|}P_I$, and so
\[
\frac{\hat\lambda}{\hat\lambda_I}
=\frac{P}{P_I}
=\frac{|A_I|}{|A|},
\qquad
\frac{\hat\mu}{\hat\mu_I}
=\frac{1-P}{1-P_I}
=\frac{1-\frac{|A_I|}{|A|}P_I}{1-P_I}
=1+\Big(1-\frac{|A_I|}{|A|}\Big)\cdot\frac{P_I}{1-P_I}.
\]
As such, $\frac{\hat\lambda}{\hat\lambda_I}\leq 1$ and $\frac{\hat\mu}{\hat\mu_I}\geq 1$.
To obtain bounds in the other directions, we require a lemma:

\begin{lemma}
\label{lem.bridge points}
Consider any $d$-regular graph $G$ and vertex subset $U\subseteq V(G)$ that induces a subtree of $G$.
At least $(1 - 2/d) |U|$ of the members of $U$ has a neighbor in $V(G)\setminus U$.
\end{lemma}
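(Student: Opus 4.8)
The plan is to count the edges of the induced tree in two ways. Write $n := |U|$, and note that because $G$ is $d$-regular, a vertex $v \in U$ has a neighbor in $V(G)\setminus U$ if and only if not all $d$ of its neighbors lie in $U$, i.e.\ if and only if its degree within the subgraph induced by $U$ is strictly less than $d$. So I would let $B$ denote the set of ``interior'' vertices whose induced degree equals $d$ (equivalently, those with no outside neighbor), and aim to show $|B| \le (2/d)\,n$; this is exactly equivalent to the claimed bound $|U \setminus B| \ge (1 - 2/d)\,n$.

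The key step is the handshake identity applied to the induced tree. Since the subgraph of $G$ induced by $U$ is a tree on $n$ vertices, it has exactly $n-1$ edges, so writing $d_{\mathrm{in}}(v)$ for the degree of $v$ within this tree gives $\sum_{v \in U} d_{\mathrm{in}}(v) = 2(n-1)$. Each vertex of $B$ contributes exactly $d$ to the left-hand sum, and every remaining term is nonnegative, so $d\,|B| \le 2(n-1) \le 2n$. Dividing by $d$ yields $|B| \le 2n/d$, and subtracting from $n$ gives $|U\setminus B| \ge (1-2/d)\,n$, as desired.

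There is essentially no obstacle here: the argument is a one-line double count, and the only things to verify are the degenerate cases (e.g.\ $|U| \le 1$, where $B = \emptyset$ and the bound holds trivially). It is worth observing that the tree hypothesis enters only through the bound of $n-1$ on the number of induced edges; the identity $\sum_{v\in U} d_{\mathrm{in}}(v) = 2\,|E(U)|$ shows more generally that the fraction of interior vertices is controlled by the edge density of the induced subgraph, so the same conclusion holds for any induced forest.
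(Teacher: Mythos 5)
Your proof is correct, and it is essentially the complementary count of the paper's argument, organized in the opposite direction. The paper lower-bounds the boundary set directly: it computes the number of crossing edges $e(U,V(G)\setminus U)=(d-2)|U|+2$ (total edge-endpoints at $U$ minus twice the $|U|-1$ tree edges) and then pigeonholes, since each boundary vertex absorbs at most $d$ crossing edges. You instead upper-bound the interior set $B$: the handshake identity in the induced tree gives $\sum_{v\in U}d_{\mathrm{in}}(v)=2(|U|-1)$, and each interior vertex contributes exactly $d$ to this sum, so $d\,|B|\leq 2(|U|-1)$. The two computations are tied together by the identity $d\,|U|=\sum_{v\in U}d_{\mathrm{in}}(v)+e(U,V(G)\setminus U)$, so the arithmetic content is identical; the difference is where the inequality is applied. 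What your route buys is economy: you never form the crossing-edge count or invoke pigeonhole, and your proof makes visible that the only input from the tree hypothesis is the bound $|E(U)|\leq|U|-1$, which is why it extends verbatim to induced forests (the paper's proof extends as well, but its edge count would have to be redone as an inequality). Your attention to the degenerate case $|U|\leq 1$ is fine, though not strictly necessary, since the handshake bound holds vacuously there.
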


\begin{proof}
Let $W$ denote the vertices in $U$ with a neighbor in $U^c:=V(G)\setminus U$.
The number of edges in the tree induced by $U$ is $|U|-1$, while the total number of edges in $G$ incident to $U$ is given by
\[
\sum_{u\in U}\operatorname{deg}(u) - (|U|-1)
=(d-1)|U|+1.
\]
As such, the number of edges between $U$ and $U^c$ is
\[
e(U,U^c)
=((d-1)|U|+1) - (|U|-1)
=(d-2)|U| + 2.
\]
Pigeonhole then gives
\[
|W|
\geq \frac{e(U,U^c)}{d}
\geq \Big(1-\frac{2}{d}\Big)|U|.
\qedhere
\]
\end{proof}

Since $U(t_0)$ induces a subtree of $G$, Lemma~\ref{lem.bridge points} gives that $|A|\geq(1 - \frac{2}{d})|U(t_{0})|$, and so
\[
|A_I|
\geq|A|-|R(t_0)|
>|A|-\epsilon|U(t_0)|
\geq (1-\tfrac{\epsilon}{1 - \frac{2}{d}})|A|
\geq (1-3\epsilon)|A|
= (1-(1+\epsilon_0)\tfrac{6\mu}{\mu+\lambda})|A|,
\]
where the last two steps use the fact that $d\geq3$ and the choice $\epsilon=(1+\epsilon_0)\frac{2\mu}{\mu+\lambda}$ for some small $\epsilon_0>0$.
It follows that
\[
\frac{\hat\lambda}{\hat\lambda_I}
=\frac{|A_I|}{|A|}
=1-(1+\epsilon_0)\cdot\frac{6\mu}{\mu+\lambda}
\geq1-(1+\epsilon_0)\cdot\frac{6}{7}
\geq\frac{1}{8},
\]
where the last two steps use the facts that $\lambda\geq6\mu$ and $\epsilon_0$ is small.
Next, consider
\[
\tilde{k}
:=|A_I|
\geq|A|-\epsilon|U(t_0)|
\geq(1-\tfrac{2}{d}-\epsilon)|U(t_0)|
\geq(1-\tfrac{2}{d}-\epsilon)k,
\]
where $1-\tfrac{2}{d}-\epsilon>0$ by assumption.
Since $\tilde{k}$ and $\tau$ both increase with factors of $\log n$, Lemma~\ref{lem.key lemma} implies that $(\hat{\lambda}_I,\hat{\mu}_I)$ converges to $(\lambda,\mu)$ in probability.
As such,
\begin{align*}
\frac{\hat\mu}{\hat\mu_I}
=1+\Big(1-\frac{|A_I|}{|A|}\Big)\cdot\frac{P_I}{1-P_I}
&\leq1+(1+\epsilon_0)\cdot\frac{6\mu}{\mu+\lambda}\cdot\frac{\hat\lambda_I}{\hat\mu_I}\\
&=1+(1+\epsilon_0)\cdot\frac{6\mu}{\mu+\lambda}\cdot\frac{\lambda}{\mu}+o(1)
\leq 8,
\end{align*}
where the last step holds when $n$ is large.
The result then follows from the fact that $(\hat{\lambda}_I,\hat{\mu}_I)$ converges to $(\lambda,\mu)$ in probability.

\section{Discussion}

In this paper, we introduced a simple algorithm to estimate SIR parameters from early infections.
There are many interesting directions for future work.
First, we do not believe that Theorem~\ref{thm.parameter estimation locally tree like} captures the true performance of Algorithm~\ref{alg.main_algorithm}, especially in light of Figure~\ref{fig.phase_transitions}, and this warrants further investigation.
Next, it would be interesting to consider other types of estimators.
Notice that since Algorithm~\ref{alg.main_algorithm} explicitly makes use of certain properties of the underlying graph, it is clear why it fails for the complete graph.
Does the behavior of the maximum likelihood estimator have a similarly transparent dependence on the underlying graph?
We focused on locally tree-like graphs in part because there is a rich literature on SIR dynamics over infinite trees, but it would be interesting to analyze the performance of Algorithm~\ref{alg.main_algorithm} on other graph families.
Also, there is a multitude of compartmental models for epidemics with various choices of probability distributions for transition times between compartments.
Finally, one might consider alternative models for what data is available.
For example, to model asymptomatic infections, one might assume that a random fraction of infected vertices are not known to be infected.

\section*{Acknowledgments}

The authors thank Boris Alexeev for interesting discussions that inspired this work.
DGM was partially supported by AFOSR FA9550-18-1-0107 and
NSF DMS 1829955.

\end{document}